\documentclass[10pt,twoside,twocolumn,letterpaper]{article}
\usepackage[letterpaper,textwidth=6.75in,textheight=9.25in,centering]{geometry}
\usepackage{float}
\usepackage[font=small]{caption}
\AtBeginDocument{\fontsize{10}{11}\selectfont}
\makeatletter
\renewcommand{\@maketitle}{%
  \newpage\null\vskip .5em%
  \begin{center}%
    {\@title\par}%
    \vskip 1em%
    {\normalsize\lineskip .4em%
      \begin{tabular}[t]{c}\@author\end{tabular}\par}%
  \end{center}%
  \par\vskip .6em%
}
\renewcommand\section{\@startsection{section}{1}{\z@}%
  {-2.2ex \@plus -.6ex \@minus -.2ex}%
  {1.1ex \@plus .2ex}%
  {\normalfont\large\bfseries\raggedright\hyphenpenalty=10000}}
\renewcommand\subsection{\@startsection{subsection}{2}{\z@}%
  {-1.8ex \@plus -.4ex \@minus -.2ex}%
  {.8ex \@plus .2ex}%
  {\normalfont\normalsize\bfseries}}
\makeatother

\usepackage{amsmath,amssymb,amsthm,mathtools,booktabs,graphicx,microtype,array,balance}
\usepackage[hidelinks,breaklinks]{hyperref}
\hypersetup{pdftitle={The Missing Corner: When Sharp Causal Bounds Fail to Compose},pdfauthor={Yicheng Qi and Xiyi Xiong},pdfsubject={Causal partial identification}}
\newtheorem{theorem}{Theorem}
\newtheorem{proposition}[theorem]{Proposition}
\newtheorem{corollary}[theorem]{Corollary}

\newcommand{\TV}{\operatorname{TV}}
\newcommand{\E}{\mathbb E}

\title{\Large\bfseries The Missing Corner: When Sharp Causal Bounds Fail to Compose}
\author{%
Yicheng Qi\textsuperscript{1}, Xiyi Xiong\textsuperscript{1}\\
\normalfont\textsuperscript{1}Imperial College London, London, United Kingdom\\
\normalfont yq225@ic.ac.uk, xx725@ic.ac.uk
}
\date{}
\begin{document}
\maketitle
\begin{abstract}
Sharp bounds on two potential-outcome means need not combine into sharp bounds on their difference. We characterize exactly when composition is guaranteed across environments sharing a joint potential-outcome law. For binary treatment and outcome, fix strict selection-probability bounds and interior treatment propensities. Subtracting the full model's sharp mean intervals gives sharp average-treatment-effect bounds for every compatible observed law if and only if the absolute selection bounds coincide across environments. Under this alignment, the complete mean region is rectangular; its endpoints and attaining causal models require $O(K)$ arithmetic operations for $K$ environments. For every mismatch, failure occurs on a relatively open set of positive observed tables at the prescribed propensities and persists under small joint drift. We give computable perturbation certificates and an exact family whose gap is first order in the mismatch. A separate finite-confounding family shows when the lost joint information changes an identified causal sign. These results distinguish local marginal sharpness from joint attainability after combining environments.
\end{abstract}

\section{Introduction}
For an intervention with a binary outcome, an analyst wants to compare two population averages: the outcome if everyone were treated and the outcome if everyone were untreated. Their difference is the average treatment effect (ATE). Each unit reveals only the outcome under the treatment it actually receives. Data and assumptions can therefore leave several possible descriptions of the unobserved outcomes. Rather than selecting one description, a bounding analysis asks how large or small each average can be across all admissible descriptions \cite{manski}.

A bound is \emph{sharp} when some admissible model attains it. This creates a tempting calculation: subtract the smallest possible control mean from the largest possible treatment mean. Both numbers are individually attainable. But the model attaining one number may not attain the other. The subtraction is sharp only if a \emph{single} model realizes both extremes. Figure~\ref{fig:core} visualizes this distinction. Every shaded point is a mean pair that one model can realize; the surrounding rectangle records the two separate ranges. In the right panel, its upper-left corner combines the smallest control mean and largest treatment mean. It is separately plausible but jointly impossible.

We study this question when several environments provide treatment--outcome observations. They share the same distribution of responses to treatment, but treatment assignment can differ. To limit selection into treatment, each environment specifies lower and upper treatment probabilities for units with each possible pair of outcomes under control and treatment. We call that pair a \emph{response type}. One environment under these restrictions permits the two extremes to be attained together \cite{dg,mpr}. Combining environments adds a requirement: the \emph{same} response distribution must explain them all. When does that requirement preserve simultaneous attainment?

\begin{figure*}[t]
\centering\includegraphics[width=.88\textwidth]{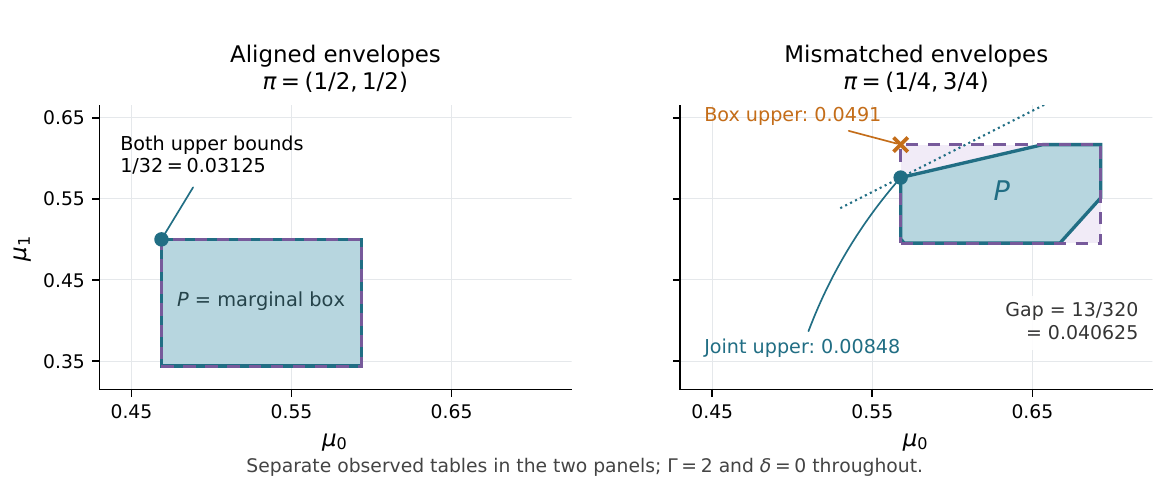}
\caption{\textbf{Two attainable marginals can leave a missing corner.} Each shaded point gives the control mean (horizontal) and treatment mean (vertical) that one causal model can attain together. Dashes show the rectangle formed by their separate ranges. The upper-left corner maximizes their difference in that rectangle. The aligned zero-drift region contains this corner. In the nonaligned example it is excluded, and subtraction exceeds the sharp ATE upper bound by $13/320$. Both panels use $\Gamma=2$ and strictly positive, different observed tables. Rational half-plane intersection verifies all vertices and six extrema. This illustrates a design theorem, not an intervention on fixed data.}
\label{fig:core}
\end{figure*}

Our main result gives an exact answer for binary outcomes, a shared response distribution, and strict selection limits. Subtraction is sharp for every compatible observed collection precisely when the absolute lower and upper selection probabilities are identical across environments. We call this \emph{alignment}. It makes the jointly attainable mean region a rectangle. Every mismatch admits positive observations with a missing corner, although other observations at that same design can compose exactly. The theorem classifies a design's guarantee; the actual observed table determines whether its corner is missing.

The result gives a direct calculation rule. Under alignment, explicit formulas and attaining models take $O(K)$ arithmetic operations for $K$ environments. For an unaligned design, first optimize the four marginal endpoints over the complete model, then check whether each ATE corner has a common realization. An excluded corner calls for joint optimization of that endpoint. Beyond the classification, we construct observations with identical marginal intervals but different population sign conclusions. The appendices quantify the loss near alignment and certify its persistence under observation perturbations and small joint drift. Section~\ref{sec:information} examines finite-sample inference on the stated sign families.

\section{Model and the composition question}
\label{sec:model}
\subsection{Potential outcomes and compatible models}
For a unit, let $A$ denote its observed treatment and $Y$ its observed outcome. The potential outcome $Y(0)$ is its outcome under control and $Y(1)$ its outcome under treatment \cite{rubin}. Standard introductions develop this counterfactual formulation and its assignment assumptions \cite{pearl,imbensrubin,hernan}. Consistency, $Y=Y(A)$, connects these counterfactual quantities to the observed outcome. With a binary outcome, the pair $T=(Y(0),Y(1))$ has four response types: $00$ and $11$ have the same outcome under either treatment, $01$ changes from zero to one under treatment, and $10$ from one to zero. The average treatment effect is $\E[Y(1)]-\E[Y(0)]$. It depends only on the two means, although restrictions on which units receive treatment can depend on their whole response type.

We observe treatment--outcome distributions in several environments. The initial model holds the response-type distribution fixed across environments while allowing treatment assignment to vary. An environment is therefore a source of additional constraints on the same latent population, rather than an independent source of potential-outcome means to subtract. The shared law fixes the joint counterfactual pair, a stronger requirement than sharing its two means, while allowing treatment rates and observed outcome distributions to vary.

A collection of observations is \emph{compatible} if at least one latent model reproduces all of it and obeys the specified restrictions. Typically several models do. Searching over these models gives the range of possible causal means; an endpoint is \emph{sharp} when the search finds a model attaining it. This is \emph{partial identification}: the observations and assumptions determine a range of effects \cite{manski03,tamer,molinari}. Even with unlimited data, that range can contain several values because the unobserved potential outcomes remain unspecified. Throughout the classification we work with known population distributions. This separates uncertainty about unobserved counterfactual outcomes from uncertainty due to a finite sample. Section~\ref{sec:information} considers the latter.

\subsection{Selection restrictions and observed cells}
Index the $K$ environments by $e$, including target $0$. In each environment, the observed data form a $2\times2$ table of treatment $A\in\{0,1\}$ and outcome $Y\in\{0,1\}$. Its cell probabilities and overall treatment probability are
\[
 p_e(a,y)=\Pr_e(A=a,Y=y),\qquad
 \pi_e=\sum_y p_e(1,y).
\]
Write $p=(p_e)_e$ for the collection of observed tables. Initially the environments share the law of $T=(Y(0),Y(1))$. In type order $00,01,10,11$, let
\[
 q(t)=\Pr(T=t),\qquad z_e(t)=\Pr_e(A=1,T=t).
\]
The vector $q$ describes the four latent response proportions. The mass $z_e(t)$ is the proportion both belonging to type $t$ and receiving treatment in environment $e$. On a positive-mass type, its treatment probability is $z_e(t)/q(t)$. Specified limits $\ell_e,u_e$ bound this probability. Writing the restriction in terms of masses gives linear inequalities, while consistency connects those masses to the observed cells:
\begin{align}
 &q\geq0,\quad \sum_tq(t)=1,\quad
 \ell_eq(t)\leq z_e(t)\leq u_eq(t),\label{eq:score}\\
 &p_e(1,y)=\sum_{t:t_1=y}z_e(t),\notag\\
 &p_e(0,y)=\sum_{t:t_0=y}\{q(t)-z_e(t)\}.\label{eq:obs}
\end{align}
Selection mechanisms vary freely within the stated restrictions. Every feasible mass vector has a causal realization: draw $T\sim q$, then assign treatment in environment $e$ with probability $z_e(t)/q(t)$ on positive-mass types. We call this conditional probability the \emph{score}. Thus the mass constraints characterize attainable causal models exactly.

The overall treatment probability $\pi_e$, called the \emph{propensity}, is the average score \cite{rrps}; the two limits bound scores within every response type. In the observation equations, treated units reveal the second coordinate of $T$, and untreated units reveal the first. For example, the cell for treated units with outcome one combines the treated masses of types $01$ and $11$. Different environments observe different mixtures of the same $q$, so their equations must hold together before taking extrema.

\subsection{Marginal sharpness and simultaneous attainment}
The counterfactual population means are $m_0=\E[Y(0)]=q_{10}+q_{11}$ and $m_1=\E[Y(1)]=q_{01}+q_{11}$. Let $P(p)$ contain every pair $(m_0,m_1)$ realized by a model satisfying all environments' equations and restrictions. This is the shaded region in Figure~\ref{fig:core}. Its attained coordinate extrema $L_a=\min_Pm_a$ and $U_a=\max_Pm_a$, $a=0,1$, give the separate mean intervals. We compare the actual range of differences with the interval obtained by subtracting those ranges:
\begin{align}
 I(p)&=\{m_1-m_0:m\in P(p)\},\notag\\
 B(p)&=[L_1-U_0,\ U_1-L_0].\label{eq:comparator}
\end{align}
The upper composition gap is
\begin{equation}\label{eq:gap}
 G(p)=U_1-L_0-\max_{m\in P(p)}(m_1-m_0)\geq0.
\end{equation}
An attainable corner $(L_0,U_1)$ makes the upper bound sharp. The opposing corner $(U_0,L_1)$ does the same for the lower bound. A rectangle is sufficient, but a particular nonrectangular region can still attain both ATE corners.

As a geometric illustration, suppose the attainable pairs lie on the segment from $(0.3,0.5)$ to $(0.5,0.7)$. Every pair has ATE $0.2$, but subtracting the control interval $[0.3,0.5]$ from the treatment interval $[0.5,0.7]$ gives $[0,0.4]$. The corner calculation combines two different feasible points. In our model, all four endpoints already use every environment. The issue is whether the model attaining $L_0$ can also attain $U_1$. Parallel lines of constant $m_1-m_0$ visualize the optimization: subtraction selects the surrounding rectangle's upper-ATE corner, while joint optimization moves the line inward until it touches $P(p)$. A missing corner creates a positive gap even though neither marginal interval can be tightened.

\subsection{Absolute envelopes and a sensitivity parameter}
For the symmetric pair-selection sensitivity model, a parameter $1\leq\Gamma_e<\infty$ gives
\begin{equation}\label{eq:msm}
 \ell_e=\frac{\pi_e}{\pi_e+\Gamma_e(1-\pi_e)},\qquad
 u_e=\frac{\Gamma_e\pi_e}{1-\pi_e+\Gamma_e\pi_e}.
\end{equation}
Odds-ratio restrictions parameterize departures from unconfoundedness \cite{tan,zhao}. The selection bounds apply conditional on the \emph{whole pair} $(Y(0),Y(1))$. Bounds conditional on each potential outcome separately are weaker \cite{mpr}. For example, types $01$ and $11$ share $Y(1)=1$. A restriction conditional on $Y(1)$ averages over these types; a pair-level restriction constrains each separately. Following \cite{mpr}, we call \eqref{eq:msm} the \emph{joint sensitivity model} (JSM); the outcome-specific \emph{marginal sensitivity model} (MSM) conditions on one potential outcome at a time. Theorem~\ref{thm:universal} separates their composition behavior. Absolute limits can also be primitive inputs. Our design theorem fixes $0<\ell_e<\pi_e<u_e<1$ before varying the outcome cells; sample propensities are estimated in the confidence construction of Section~\ref{sec:information}.

To interpret $\Gamma_e$, write $\operatorname{odds}(s)=s/(1-s)$. Equation~\eqref{eq:msm} limits the ratio of a type's treatment odds to the population treatment odds to $[1/\Gamma_e,\Gamma_e]$. It specifies a relative departure from the environment's propensity. Alignment instead concerns the absolute interval of allowable scores. For example, the common value $\Gamma=2$ gives $(\ell,u)=(1/7,2/5)$ at $\pi=1/4$, but $(3/5,6/7)$ at $\pi=3/4$. The relative sensitivity parameter is identical; the permitted scores are not. This distinction is what the classification tests.

The shared-law model also has a familiar instrumental-variable (IV) interpretation: the environment changes treatment assignment while leaving the response-type distribution fixed. Sharp binary-IV bounds can be derived by optimizing response-type masses \cite{bp}. The unrestricted binary IV model has a rectangular mean region \cite{rr}. Dependence between attainable marginal quantities can prevent composition in categorical IV models \cite{song}. Here the new restrictions act on assignment within each joint response type. Theorem~\ref{thm:universal} identifies exactly which finite assignment bounds preserve composition for every observation law compatible with the constrained binary IV model.

\begin{samepage}
\section{When marginal subtraction is exact}
\label{sec:universal-main}
Call a design \emph{aligned} if $(\ell_e,u_e)$ is identical across environments.

\begin{theorem}[Universal composition]\label{thm:universal}
Fix $K\geq2$ and $0<\ell_e<\pi_e<u_e<1$ for each environment. The envelopes are aligned if and only if $P(p)$ is a rectangle for every compatible observed collection at these propensities, and if and only if $I(p)=B(p)$ for every such collection.
Every nonaligned design admits strictly positive compatible tables at the prescribed propensities with $G(p)>0$.
If the same limits instead bound selection conditional on each $Y(a)$ separately, the corresponding mean region is rectangular for every compatible observed collection, without requiring alignment.
\end{theorem}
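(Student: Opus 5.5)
The plan is to prove the three equivalent conditions in a cycle. (i) Alignment implies that $P(p)$ is a rectangle for every compatible $p$. (ii) A rectangle gives $I(p)=B(p)$ trivially, because both corners $(L_0,U_1)$ and $(U_0,L_1)$ are attained and $P(p)$ is convex; convexity holds since it is a linear image of the polytope \eqref{eq:score}--\eqref{eq:obs}. (iii) Nonalignment implies some strictly positive compatible $p$ with $G(p)>0$, hence $I(p)\neq B(p)$, and hence $P(p)$ is not a rectangle. This closes the cycle and also proves the second sentence of the theorem. The MSM statement is handled separately at the end.

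\textbf{Aligned case.} For (i), fix $(m_0,m_1)$ and treat $q_{11}$ as the one free latent parameter. The remaining types are then $q_{10}=m_0-q_{11}$, $q_{01}=m_1-q_{11}$ and $q_{00}=1-m_0-m_1+q_{11}$. For a single environment, the observation equations split the treated masses into two pairs: $z_e(01)+z_e(11)=p_e(1,1)$ and $z_e(00)+z_e(10)=p_e(1,0)$. They also fix the untreated sums over $\{10,11\}$ and $\{00,01\}$. Feasibility of these pairs under the box constraints $\ell_e q(t)\le z_e(t)\le u_e q(t)$ is a small transportation problem, whose Hall-type conditions I would write out explicitly. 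Eliminating the $z_e$ by Fourier--Motzkin then gives two kinds of inequality. The first kind are marginal: $\ell_e m_1\le p_e(1,1)\le u_e m_1$, its analogue for $1-m_1$, and the untreated analogues $(1-u_e)m_0\le p_e(0,1)\le(1-\ell_e)m_0$ and so on. The second kind are coupling constraints, which confine $q_{11}$ to an interval $J_e(m_0,m_1)$.

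The key claim is that, for a fixed $(m_0,m_1)$ satisfying all the marginal inequalities, each $J_e$ is nonempty and the $J_e$ have a common point. With identical $(\ell,u)$, I expect the endpoints of $J_e$ to be ordered uniformly in $e$, because the coefficients multiplying $q_{11}$ are the same functions of $(\ell,u)$ in every environment. Concretely, I would try $q_{11}=\min(m_0,m_1)$ or the Fréchet-type extreme value, and check that it lies in every $J_e$; the $e$-dependence then enters only through $p_e$, which has already been absorbed by the marginal inequalities. This shows that the feasible set is the product of an $m_0$-interval and an $m_1$-interval. Writing the attaining masses explicitly gives the $O(K)$ claim, but that belongs to later results.

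\textbf{Nonaligned case: the main obstacle.} Step (iii) is where I expect the real work. Take two environments with $(\ell_1,u_1)\neq(\ell_2,u_2)$, say $\ell_1<\ell_2$; the case $u_1\neq u_2$ is symmetric by relabeling $A$. I would build $q^\ast$ near the corner and tables $p_e$ so that $L_0$ is forced only by saturating the lower bound $\ell_2$ on type $10$ in environment $2$. Meanwhile $U_1$ should be forced only by a configuration of $q_{11}$ and $z_1$ that environment $1$ permits but environment $2$, with its tighter floor, excludes. This makes the two coupling intervals $J_1$ and $J_2$ disjoint exactly at the corner $(L_0,U_1)$.

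The plan for this step has three parts. First, choose $q^\ast$ at a vertex where these specific bounds bind, with all scores interior except the binding ones, so the propensities $\pi_e$ are matched exactly. Second, verify by LP duality that the corner LP is infeasible, by exhibiting a Farkas certificate. Third, obtain strict positivity of the tables and the relatively open set by perturbing the cells: the Farkas certificate has a strict margin, and the binding structure of the marginal optima is stable under small perturbations. The delicate point is keeping the propensities fixed while perturbing, which I would handle by perturbing only along directions with $\sum_y p_e(1,y)$ constant.

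\textbf{MSM variant.} For the MSM, the restrictions bound $\Pr_e(A=1\mid Y(a)=y)$. These constraints involve only the group sums $z_e(01)+z_e(11)$, $z_e(00)+z_e(10)$ and their $Y(0)$ counterparts. The observation equations therefore reduce directly to inequalities in $m_1$ alone (from the treated cells) and in $m_0$ alone (from the untreated cells). It remains to check that any $q$ with the prescribed margins admits a split of the $z_e$ that is consistent with both groupings simultaneously. This is a $2\times2$ transportation feasibility problem with only marginal constraints, which is always solvable, for example by taking the scores constant within each cell of the split. Hence the MSM mean region is a product of intervals for every compatible collection, regardless of alignment.
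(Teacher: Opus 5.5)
Your cycle of implications is organized correctly and step (ii) is fine, but the aligned step rests on a false claim. For $m$ in every local rectangle each $J_e(m)$ is nonempty, yet the $J_e$ need not share a point, even under alignment. A common $c$ also needs $A_e\le C_f$, $A_e\le H_f$, $B_e\le C_f$, $B_e\le H_f$ for $e\ne f$. These combine intercepts from different environments (e.g.\ $A_e\le C_f\iff(u+\ell)m_1\le s_f-r_e+u$), which no single environment's marginal inequalities capture. That is why the aligned endpoints in Section~\ref{sec:aligned-formulas} carry extra cross terms. For instance, take $(\ell,u)=(1/3,2/3)$, $\pi_e=1/2$, $b_e=1/6$, $a_e=1/4$ and $a_f=1/5$. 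The point $(2/5,12/25)$ lies in both local rectangles, but $A_e=31/100>11/50=C_f$. Your trial value $c=\min(m_0,m_1)$ fails for the same reason, since $C_e$ or $H_e$ can lie below it. The missing observation is that alignment makes the coefficients of $(m_0,m_1)$ in $A_e,B_e,C_e,H_e$ environment-independent. Each cross comparison therefore cancels one mean and becomes a one-coordinate bound, so the region is the local intersection cut further by such bounds, hence still a rectangle.

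Your MSM paragraph has the same gap. The outcome-specific limits do reduce to one-coordinate inequalities, but one $q$ must serve all environments with $0\le z_e\le q$. The unrestricted IV fibers must therefore intersect, and their cross comparisons, such as $m_1\le1+s_f-r_e$, can bind inside $\bigcap_eR_e$. For example, with $\pi_e=1/2$, $b_e=1/20$, $a=(3/10,1/10)$ and $(\ell,u)=(1/100,99/100)$, the point $(2/5,1/2)$ meets every MSM inequality but has disjoint IV fibers $[1/4,7/20]$ and $[1/20,3/20]$. You need the binary-IV fact that these comparisons are again one-coordinate, which gives $P_{\rm weak}=P_{\rm IV}\cap\bigcap_eR_e$.

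The converse carries most of the theorem, and your proposal gives only a plan for it. A Farkas certificate at $(L_0,U_1)$ presupposes the full-model extrema, which themselves depend on cross-environment constraints. You construct neither tables nor a certificate, and saturating the floor $\ell_2$ on type $10$ does not identify which inequality separates the corner. The paper's argument avoids computing any extremum. Write $\alpha_i=(1-u_i)/D_i$ and $\beta_i=u_i/D_i$, and choose a mismatched pair with $\alpha_e>\alpha_f$, or with $\alpha_e=\alpha_f$ and $\beta_e>\beta_f$. Then $A_e\le C_f$ (respectively $A_e\le H_f$) has the oblique form $-a m_0+b m_1\le\mathrm{const}$ with $a,b>0$. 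Take a strictly positive common law $q^*$, endpoint scores in $e$, and a propensity-preserving score pattern in $f$. Choose the pattern so that its only zero group cells make exactly that comparison tight at $m^*$, with every other fiber comparison and local-rectangle row strict. Then $m^*\pm t(b,a)\in P(p)$ for small $t$, so $L_0\le m_0^*-tb$ and $U_1\ge m_1^*+ta$. The corner violates the facet by $2abt$, which gives $G\ge2\min(a,b)t>0$. Positivity of the tables follows from $q^*>0$ and interior scores, so no perturbation step is needed here, nor stability of optimal bases, which you have not justified. Finally, relabeling $A$ alone maps the upper gap to the lower gap. Reducing a $u$-mismatch to an $\ell$-mismatch while keeping the claim about $G$ also requires outcome complementation.
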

\end{samepage}

The theorem ranges over observations at a \emph{fixed design}: alignment guarantees composition in every compatible dataset, while every mismatch admits a positive-table counterexample. For a supplied dataset, Section~\ref{sec:workflow} gives the exact attainability test.

For the symmetric JSM with every $1<\Gamma_e<\infty$, this means that universal composition holds if and only if all propensities and all sensitivity parameters are equal. The outcome-specific comparison follows by intersecting the binary IV rectangle with coordinate-wise selection restrictions \cite{rr,mpr}; Appendix~\ref{app:uniform-mismatch} gives the argument. This contrast locates the composition boundary in the restrictions within joint response types.

The log-odds midpoint and half-width of the absolute envelope are $\operatorname{logit}\pi_e$ and $\log\Gamma_e$, respectively. Strictness matters: one environment with $\Gamma_e=1$ identifies both common means, and the unrestricted-IV envelope $[0,1]$ has classical rectangularity \cite{rr}. A propensity at $\pi_e=\ell_e$ or $u_e$ also identifies both means because its score is constant. These cases are outside the classification, which requires $\ell_e<\pi_e<u_e$.

\subsection{Why the boundary is about dependence}
To test a proposed mean pair, we need to recover the response proportions that could produce it. The two means specify how often each outcome is one, but not how often \emph{both} are one. Let $c=q_{11}$ be that remaining proportion. Types $10$ and $01$ then have masses $m_0-c$ and $m_1-c$, and the mass of type $00$ follows by subtraction from one. Thus the entire joint law becomes
\begin{equation}\label{eq:fiber-q}
 q(m,c)=(1-m_0-m_1+c,\ m_1-c,\ m_0-c,\ c).
\end{equation}
Nonnegative masses require $\max(0,m_0+m_1-1)\leq c\leq\min(m_0,m_1)$. For a mean pair satisfying an environment's local mean bounds, its observations and score limits narrow this to an interval $J_e(m)$. We call it a \emph{fiber}: the remaining possibilities for the latent law at a fixed mean pair. The same $c$ must work in every environment. If one environment requires more type-$11$ mass than another allows, no common law realizes that pair. This is the information the two marginal intervals do not retain.

The following construction makes those intervals explicit. Decompose a local law into nonnegative groups assigned at the two endpoint scores,
\[
 q_e^{\rm low}=\frac{u_eq-z_e}{u_e-\ell_e},\qquad
 q_e^{\rm high}=\frac{z_e-\ell_eq}{u_e-\ell_e}.
\]
Within each group, its fixed total mass and two outcome marginals determine a range of allowable $11$ masses, called a Fr\'echet interval \cite{nelsen}. If the two outcome-one masses add to more than the group's total mass, their excess must overlap; that overlap cannot exceed the smaller marginal mass. These give the lower and upper limits. Adding the two group intervals gives the exact fiber for means satisfying the local bounds. Write
\[
 p_e=\begin{pmatrix}1-\pi_e-a_e&a_e\\\pi_e-b_e&b_e\end{pmatrix},
\]
with rows $A$ and columns $Y$, both ordered $0,1$. Put $D_e=u_e-\ell_e$, $r_e=a_e-b_e+\pi_e=\Pr_e(Y\ne A)$, and $s_e=a_e+b_e=\Pr_e(Y=1)$. Then
\begin{equation}\label{eq:local-fiber}
 J_e(m)=[\max\{f_-,A_e,B_e\},\ \min\{f_+,C_e,H_e\}],
\end{equation}
where $f_-=\max\{0,m_0+m_1-1\}$ and $f_+=\min\{m_0,m_1\}$. Here $A_e,B_e$ are additional lower requirements on $c$, and $C_e,H_e$ are upper requirements. The maximum collects all lower requirements; the minimum collects all upper requirements. A common law exists at the proposed means precisely when the resulting intervals overlap across environments. The four affine pieces are
\begin{align*}
 A_e&=\{r_e-(1-u_e)m_0+u_em_1-u_e\}/D_e,\\
 B_e&=\{-r_e+(1-\ell_e)m_0-\ell_em_1+\ell_e\}/D_e,\\
 C_e&=\{s_e-(1-u_e)m_0-\ell_em_1\}/D_e,\\
 H_e&=\{-s_e+(1-\ell_e)m_0+u_em_1\}/D_e.
\end{align*}
Appendix~\ref{sec:aligned-envelope} derives all four affine pieces and the group construction. Under alignment, every cross-environment lower--upper comparison cancels one mean coordinate. Outside alignment an oblique comparison can survive. This is the geometric mechanism behind both directions of Theorem~\ref{thm:universal}.

\subsection{Alignment gives bounds and attaining models}
\label{sec:aligned-formulas}
Why does alignment remove the missing corner? Each environment gives lower and upper requirements on the same $c$. Comparing a lower requirement from one environment with an upper requirement from another combines their information. With common score limits, one mean's coefficient cancels: each resulting requirement bounds only $m_0$ or only $m_1$. The allowable combinations therefore fill a rectangle. The endpoint formulas below record these coordinate-wise restrictions.

For a common envelope $(\ell,u)$, let $r_- = \min_e r_e$, $r_+ = \max_e r_e$, and similarly for $s_-,s_+$. Considering each environment alone gives a local mean rectangle. Their intersection has endpoints $\widetilde L_a,\widetilde U_a$, with explicit formulas in Appendix~\ref{sec:aligned-envelope}. The additional terms below enforce that these local possibilities share one response law. The full-model endpoints are
\begin{align*}
 L_0&=\max\{\widetilde L_0,(r_++s_+-u)/(2-u-\ell)\},\\
 U_0&=\min\{\widetilde U_0,(r_-+s_--\ell)/(2-u-\ell)\},\\
 L_1&=\max\{\widetilde L_1,(s_+-r_-+\ell)/(u+\ell)\},\\
 U_1&=\min\{\widetilde U_1,(s_--r_++u)/(u+\ell)\}.
\end{align*}
They include cross-environment compatibility, rather than only local interval intersections. For example, $A_e\le C_f$ reduces to $(u+\ell)m_1\le s_f-r_e+u$; the other comparison types likewise constrain just one coordinate. Scanning the extrema takes $O(K)$ arithmetic operations.

For any pair in the resulting rectangle, choose $c\in\bigcap_eJ_e(m)$. Split $c$ between each environment's low- and high-score groups, whose binary coupling intervals have that sum. The resulting nonnegative measures reconstruct $q$ and $z_e=\ell q_e^{\rm low}+u q_e^{\rm high}$ and reproduce every observed table. This provides an attaining model at every mean pair, including zero-mass groups. Appendix~\ref{sec:aligned-envelope} gives the complete reconstruction. These formulas explicitly solve the same three-variable model used for an unaligned design.

\subsection{Every mismatch can exclude the ATE corner}
With unequal score limits, that cancellation can fail: a comparison can trade one mean against the other. A sloping boundary may allow a smaller control mean at one point and a larger treatment mean at another, yet exclude their combination. To prove necessity, we must make this boundary active in valid positive observations at the prescribed treatment rates. Appendix~\ref{sec:universal-necessity} constructs such observations for every mismatch; the following sketch explains how it separates the corner.

Parameterize the envelope slopes by
\[
 \alpha_e=(1-u_e)/D_e,\qquad \beta_e=u_e/D_e.
\]
An envelope is determined by $(\alpha_e,\beta_e)$. For a mismatched pair, the construction uses a positive common type law, endpoint scores in one environment, and scores in the other chosen to maintain its exact propensity. Additional environments use their interior constant scores. Exactly one cross-fiber comparison is active at a mean point $m^*$; all other comparisons and local mean constraints have slack. That comparison is
\begin{equation}\label{eq:oblique}
 -a(m_0-m_0^*)+b(m_1-m_1^*)\leq0,\qquad a,b>0.
\end{equation}
Moving in direction $(b,a)$ leaves the active comparison at equality: one mean's increase offsets the other's. For sufficiently small $t>0$, both $m^*\pm t(b,a)$ are feasible. Hence the separately attained corner $(L_0,U_1)$ violates the inequality by at least $2abt$. For every feasible $m$,
\[
 (U_1-L_0)-(m_1-m_0)\geq2\min(a,b)t>0.
\]
The two nearby feasible points supply the small control mean and large treatment mean separately; the active boundary prevents combining them. This proves the converse with positive observations at the fixed propensities.

\subsection{Checking the observed table}
\label{sec:workflow}
The theorem answers which designs make subtraction safe for every compatible table. For a particular table at an unaligned design, the question is more direct: do its two extremal mean pairs have a common response law? The calculation guide gives the steps, and the fiber intervals supply the check.

\begin{table}[H]
\centering\small
\begin{tabular}{@{}p{.22\columnwidth}p{.72\columnwidth}@{}}
\toprule
Design & Calculation for sharp ATE bounds \\
\midrule
Aligned, zero drift & Use the explicit marginal formulas and attaining models in Section~\ref{sec:aligned-formulas}; subtraction is sharp. \\
Unaligned, zero drift & Solve four marginal objectives over the three-variable model. Check both corners; optimize each joint ATE endpoint whose check fails. \\
Positive drift & Optimize marginal and ATE objectives over the target/source mass model with the drift constraints. \\
\bottomrule
\end{tabular}
\par\smallskip\noindent\textbf{Calculation guide.} The aligned formulas compute endpoints and witnesses in $O(K)$ arithmetic. For an unaligned table, the $O(K)$ corner scan follows the four marginal optimizations. In every row, subtraction contains the sharp ATE interval.
\end{table}

For means in all local rectangles, define
\[
\begin{split}
 \underline c(m)&=\max\{0,m_0+m_1-1,\max_e A_e,\max_e B_e\},\\
 \overline c(m)&=\min\{m_0,m_1,\min_e C_e,\min_e H_e\}.
\end{split}
\]
The upper subtraction endpoint is sharp exactly when
\begin{equation}\label{eq:corner-test}
 \underline c(L_0,U_1)\le\overline c(L_0,U_1).
\end{equation}
The lower endpoint uses the same test at $(U_0,L_1)$. Necessity follows because attaining $U_1-L_0$ forces both means to equal their marginal extrema. Sufficiency follows from the common-$c$ construction. Once the marginal endpoints are available, each check scans $O(K)$ affine values and returns an attaining model or two conflicting fiber pieces.

At positive drift, environments can have different dependence coordinates. The calculation then uses the full drift model in \eqref{eq:drift}.

The check is reproducible through \path{research24/diagnose.py}. For rational population tables, the program uses aligned formulas when available. Otherwise, it locates LP bases in $(m_0,m_1,c)$ and verifies feasibility, dual signs and equality of primal and dual objectives in rational arithmetic. It outputs the two corner tests, conflict certificates or attaining masses, and the resulting sharp ATE endpoints. Appendix~\ref{app:workflow-records} gives the input format, certification-status handling and verification records.

For Figure~\ref{fig:core}'s mismatched tables, the upper corner is $(109/192,829/1344)$. Evaluating its fiber gives
\[
 A_0=109/192>59/144=C_1.
\]
The corner would require the same proportion $c$ to be at least $109/192$ and at most $59/144$. These conflicting requirements exclude it directly. Written in terms of the means, their comparison is
\[
 -16m_0+35m_1\le133/12.
\]
\fontsize{10}{12.7}\selectfont
Both means are individually sharp. Joint optimization gives $U_\tau=19/2240$, versus subtraction's $11/224$, a loss of $13/320$. The fiber separation $91/576$ measures the conflicting requirements on the common response mass $c$.

The same unaligned $\Gamma=2$, $\pi=(1/4,3/4)$ design can instead compose exactly. With tables
\[
 p_0=\frac18\begin{pmatrix}3&3\\1&1\end{pmatrix},\qquad
 p_1=\frac18\begin{pmatrix}1&1\\3&3\end{pmatrix},
\]
both sharp mean intervals are $[7/16,9/16]$, and both ATE corners pass: $I=B=[-1/8,1/8]$. Thus a mismatch identifies where the universal shortcut ends, while the fibers decide whether subtraction actually loses sharpness.

\balance
\section{Population sign loss and sampling}
\label{sec:information}
A negative ATE upper bound certifies a negative effect in every compatible model. The following family holds all four marginal endpoints fixed while the joint upper bound changes sign. It shows what information is lost by replacing the attainable mean region with its two coordinate intervals. Here we also allow the response-type law to vary slightly across environments. This \emph{drift} is measured relative to the target law $q_0$, under which the target means are taken:
\begin{equation}\label{eq:drift}
 \TV(q_e,q_0)=\tfrac12\|q_e-q_0\|_1\le\delta_e,\qquad e>0.
\end{equation}

Total variation here measures the probability mass that must be moved between response types to turn one law into the other. A budget $\delta_e=0.01$ permits moving at most one percent of that mass. At zero budget the laws coincide, recovering the shared-law model.

\begin{proposition}[Exact decision loss under joint drift]\label{thm:insufficiency}
With $\Gamma_e=2$, $\pi=(1/4,3/4)$, $t\in[-1,1]$, and $\delta=\delta_1\in[0,1/100]$, let
\begin{align*}
 p_0(t)&=\frac1{160}\begin{pmatrix}60&60\\25-t&15+t\end{pmatrix},\\
 p_1&=\frac1{160}\begin{pmatrix}27&13\\74&46\end{pmatrix}.
\end{align*}
The sharp target mean endpoints are $(7/16,9/16,161/480-\delta,221/480+\delta)$, independent of $t$, while
\begin{equation}\label{eq:summary-family}
 U_\tau(p(t),\delta)=t/160+\delta.
\end{equation}
The laws at $t=\pm1$ have different negativity decisions for every $\delta<1/160$. The distinction persists on paired observation neighborhoods for $\delta\le1/320$.
\end{proposition}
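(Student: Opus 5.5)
The proof is a finite linear-programming computation, and I plan to verify it by explicit primal witnesses and dual certificates. Write the drift model with target law $q_0$, source law $q_1$ satisfying $\tfrac12\|q_1-q_0\|_1\le\delta$, treated masses $z_0,z_1$ obeying \eqref{eq:score} with respect to their own laws, and observation equations \eqref{eq:obs} for each environment. At $\Gamma=2$ the envelopes are $(\ell_0,u_0)=(1/7,2/5)$ and $(\ell_1,u_1)=(3/5,6/7)$. These are not aligned, so Theorem~\ref{thm:universal} allows a gap.

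\textbf{Marginal endpoints.} The target table $p_0(t)$ has top row $(60,60)/160$ independent of $t$. Its bottom row has the fixed total $40/160$, and only the split of $Y$ among the treated changes with $t$.

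For $m_0=\E_{q_0}[Y(0)]$, I expect the binding constraints to involve the untreated cells of $p_0$ together with the score limits. These cells do not depend on $t$. The endpoints $7/16$ and $9/16$ should then follow from the environment-$0$ fiber pieces alone. I would exhibit attaining masses and check that environment $1$ can be made compatible with them at $\delta=0$, which gives compatibility for every $\delta$.

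For $m_1$, environment $1$ presumably binds. The source endpoints are $161/480$ and $221/480$. Drift loosens the source endpoint by exactly $\delta$, because moving $\delta$ mass between types $00$ and $01$ (or $10$ and $11$) shifts $m_1$ by $\delta$. The transfer of bounds from $q_1$ to $q_0$ is therefore exact for $\delta\le1/100$, as long as the target-local constraints stay slack; I would check this slackness explicitly. I also need to confirm that the target's own $m_1$ bounds, which depend on $t$, are not binding for $|t|\le1$.

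\textbf{Joint upper bound.} To show $U_\tau=t/160+\delta$, I would solve the LP $\max\, m_1-m_0$ and present two certificates.
\begin{itemize}
\item[(i)] \emph{Primal.} Give explicit rational masses, affine in $(t,\delta)$, that attain $t/160+\delta$.
\item[(ii)] \emph{Dual.} Give a nonnegative combination of the constraints bounding $m_1-m_0$ by $t/160+\delta$.
\end{itemize}
The natural candidate for the dual is the oblique comparison between a lower piece $A_0$ or $B_0$ and an upper piece $C_1$ or $H_1$. The drift enters through the TV budget as $+\delta$ on the common-$c$ coupling.

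The \textbf{main obstacle} is finding this dual combination and showing it stays valid, with the same basis, over the whole range $t\in[-1,1]$, $\delta\in[0,1/100]$. I would first solve the LP at the corners $(t,\delta)\in\{\pm1\}\times\{0,1/100\}$ in rational arithmetic. Because the optimal value is affine in the parameters, I would then argue that the optimal basis is constant on the box, which requires checking that the reduced-cost and feasibility inequalities are affine and hold at the vertices.

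\textbf{Sign decisions.} Given the formula, at $t=-1$ we have $U_\tau=-1/160+\delta<0$ for $\delta<1/160$, which certifies a negative effect. At $t=1$ we have $U_\tau=1/160+\delta>0$, so no such certificate exists.

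\textbf{Persistence.} If the basis is nondegenerate, $U_\tau$ is Lipschitz in the observed cells near each table, and I would bound the Lipschitz constant by the dual multipliers. For $\delta\le1/320$ the margin is $|U_\tau|\ge1/320$ at both ends. Taking neighborhoods with radius smaller than this margin divided by the $\ell_1$ norm of the dual weights then keeps the two decisions different throughout the paired neighborhoods.
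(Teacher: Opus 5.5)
Your plan for the five population endpoints follows the paper's route. The marginal bounds come from the target's untreated cells and the source's treated cells, plus $|n_1-m_1|\le\TV(q_1,q_0)$. The joint upper is certified by a dual combination; in the paper this is the $A_0$--$C_1$ comparison plus the target lower-$m_0$ row. Attainment uses witnesses affine in $(t,\delta)$, checked at the four vertices.

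Two corrections in that part. First, $(t,\delta)$ enter only the right-hand side. A single dual vector therefore certifies $U_\tau\le V(y)+\delta$ for all parameters (indeed for all observations), and no basis-constancy argument is needed. Your ``the value is affine, hence the basis is constant'' also runs the logic backwards. Second, drift does not act as $+\delta$ on $c$ alone, since under drift there is no common $c$. Combining $A_0(m)\le c_0$ with $c_1\le C_1(n)$ leaves the term $\tfrac{35}{9}(0,\tfrac35,\tfrac17,1)^\top(q_0-q_1)$. That vector has range one, so the term is at most $\tfrac{35}{9}\delta$. This gives $-16m_0+35m_1\le35(s_1-r_0+u_0+\delta)$, and with $m_0\ge7/16$, $m_1-m_0\le t/160+\delta$. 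It is exactly the paper's $\varphi^\top(q_1-q_0)$ term, because $\varphi=\mathbf 1-(0,\tfrac35,\tfrac17,1)$. If you instead held the means common and let only $c$ move by $\delta$, you would get $t/160+9\delta/35$, which the attaining witness violates.

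The genuine gap is the proposition's last sentence. ``Paired observation neighborhoods'' means pairs $y(-1)+h$ and $y(1)+h$ with a \emph{common} perturbation $h$ in the fixed-propensity slice. The claim is that both members keep \emph{identical} four-endpoint summaries while their negativity decisions differ; that is what makes it decision loss. Your Lipschitz argument only keeps the sign of $U_\tau$ near each table separately. That is ordinary continuity and says nothing about the summaries agreeing across the pair.

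To close this, you need $L_0=7a_0/6$, $U_0=1/8+7a_0/6$, $L_1=7b_1/6-\delta$ and $U_1=1/8+7b_1/6+\delta$ to stay exact on the neighborhood. They then coincide across the pair because none involves $b_0$. Their bounding identities hold for all observations, so this amounts to the four attaining witnesses remaining feasible after adding $h$.

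Your Lipschitz step is also conditioned on a nondegenerate basis, which you will not have: at $\delta=0$ every TV row is tight. Moreover, dual multipliers control only the direction $U_\tau(y)\le V(y)+\delta$. That handles $t=-1$, but it gives neither nonemptiness there nor $U_\tau>0$ near $t=1$.

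The paper settles all of this by perturbing the five witnesses directly. Rows with zero base slack have zero observation gradient, and $\min_i s_i/\|g_i\|_1=3/3920>1/2000$. Hence all five endpoint formulas hold for $\|h\|_\infty\le1/2000$ and $\delta\le1/320$. Then $|v(h)|\le139/60000$ gives $U_\tau(y_-)\le-97/120000<0<59/15000\le U_\tau(y_+)$.
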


The common subtraction interval is $[-109/480-\delta,11/480+\delta]$. At zero drift, the joint upper bounds are $-1/160$ for $t=-1$ and $1/160$ for $t=1$, although subtraction gives the same positive upper bound $11/480$. The first table certifies a negative effect in every compatible model; the second admits a positive effect. This distinction persists whenever $\delta<1/160$. Appendix~\ref{app:decision-information} supplies dual identities and attaining causal models. Appendix~\ref{sec:mirror-design-information} gives the separate mirror class.

\label{sec:evidence}
The population calculation treats the observed cell probabilities as known. A finite sample instead supplies uncertain estimates of them. This identification--inference distinction underlies confidence intervals for partially identified parameters \cite{im,stoye}. Both procedures therefore optimize the same confidence-based outer causal model $\mathcal F_n$: the joint upper maximizes $m_1-m_0$ over it, while subtraction uses $\max_{\mathcal F_n}m_1-\min_{\mathcal F_n}m_0$. A negativity declaration requires this entire outer range to lie below zero. Sampling uncertainty enlarges the set of plausible models, so a negative population upper bound can become a nonnegative confidence upper bound. The confidence construction, frozen protocols and Monte Carlo intervals are in Appendix~\ref{app:revision-evidence}.

\begin{table}[H]
\centering\small
\begin{tabular}{lccc}
\toprule
Family setting & Pop. sep. & Joint & Subtr.\\
\midrule
$t=-1,\ \delta=0$ & Yes & 0,0,0 & 0,0,0\\
$t=1,\ \delta=0$ & No & 0,0,0 & 0,0,0\\
$t=-1,\ \delta=1/320$ & Yes & 0,0,0 & 0,0,0\\
$t=1,\ \delta=1/320$ & No & 0,0,0 & 0,0,0\\
Mirror $t=-\epsilon/2$ & Yes & 0,0,0 & 0,0,0\\
Mirror $t=\epsilon/2$ & No & 0,0,0 & 0,0,0\\
\bottomrule
\end{tabular}
\caption{\textbf{Sampling the stated sign families.} Each count triple corresponds to $n=3000,10000,30000$ per environment, with 200 trials at each size. ``Pop. sep.'' means the population joint upper is negative but subtraction's is nonnegative. The first four rows use Proposition~\ref{thm:insufficiency}; the mirror rows use Theorem~\ref{thm:design-information} in Appendix~\ref{sec:mirror-design-information}, with $(l,u,p)=(1/7,2/5,1/4)$ and $\epsilon=\epsilon_0/2$. Both procedures use the same outer model. All six settings were fixed before sampling.}
\label{tab:abstract-sampling}
\end{table}

Table~\ref{tab:abstract-sampling} reports all six frozen settings. Both procedures make zero negativity declarations in the 3600 trials, including the population-separated negative-$t$ settings. Under the specified outer confidence model, these constructions therefore show no observed sample-level sign-detection advantage. The common feasible set controls the information supplied to both procedures; separating confidence-region conservatism from intrinsic sampling difficulty remains an open statistical question.

\label{sec:main-end}
\clearpage\nobalance
\fontsize{10}{11}\selectfont

\clearpage\onecolumn
\section*{Reproducibility Checklist}
\begin{enumerate}
\item For all models and algorithms presented, check if you include:
\begin{enumerate}
\item A clear description of the mathematical setting, assumptions, algorithm, and/or model.
[Yes] Section~\ref{sec:model} defines the models, common-law assumption and full-model comparator. Section~\ref{sec:workflow} gives the calculation, and Section~\ref{sec:information} defines joint drift.
\item An analysis of the properties and complexity (time, space, sample size) of any algorithm.
[Yes] Theorem~\ref{thm:rectangle} in Appendix~\ref{sec:aligned-envelope} gives the $O(K)$ arithmetic construction. Section~\ref{sec:workflow} and Appendix~\ref{app:workflow-records} specify the marginal optimizations, corner checks and joint optimizations for failed corners. No new general solver-runtime or statistical sample-complexity guarantee is claimed; finite-sample methods and costs are documented in Appendix~\ref{app:revision-evidence} and the source supplement.
\item (Optional) Anonymized source code, with specification of all dependencies, including external libraries.
[Yes] The anonymous supplement contains source code, checked dependency versions, saved inputs, certificates, and reproduction commands.
\end{enumerate}

\item For any theoretical claim, check if you include:
\begin{enumerate}
\item Statements of the full set of assumptions of all theoretical results.
[Yes] The statements specify strict envelopes, binary outcomes, common or drifting joint laws, fixed propensities, and the additional conditions for perturbation certificates.
\item Complete proofs of all theoretical results.
[Yes] The included theoretical appendices give the constructive proofs, separating inequalities and attaining witnesses. Appendix~\ref{app:revision-evidence} supplies the outer-confidence-model coverage argument. Numerical examples do not replace universal proofs.
\item Clear explanations of any assumptions.
[Yes] Sections~\ref{sec:model} and \ref{sec:evidence} explain their meaning and limits; Appendix~\ref{sec:robust-universal-failure} states the perturbation assumptions. Sensitivity and drift budgets are not claimed to be identified from the observed tables.
\end{enumerate}

\item For all figures and tables that present empirical results, check if you include:
\begin{enumerate}
\item The code, data, and instructions needed to reproduce the main experimental results (either in the supplemental material or as a URL).
[Yes] The supplement contains every saved sign-family count table, all reported benchmark rows, protocols, scripts, exact figure geometry, and reproduction instructions.
\item All the training details (e.g., data splits, hyperparameters, how they were chosen).
[Yes] No predictive model is trained. Appendix~\ref{app:revision-evidence}, \path{research23/PROTOCOL.md}, and the archived protocols specify the grids, trial counts, seed, confidence level, solver threshold, and mechanism selection.
\item A clear definition of the specific measure or statistics and error bars (e.g., with respect to the random seed after running experiments multiple times).
[Yes] Counts are reported with trial denominators. Appendix~\ref{app:revision-evidence} and the saved summaries define the declaration rule and supply pointwise binomial Monte Carlo intervals. The theoretical figures are exact population constructions, not sampled error-bar plots. Reused designs and the absence of a prevalence interpretation are explicit.
\item A description of the computing infrastructure used. (e.g., type of GPUs, internal cluster, or cloud provider).
[Yes] Appendix~\ref{app:reproducibility} records CPU computation and the production software versions. The historical CPU model and memory were not recorded; saved experiment records retain the reported elapsed times.
\end{enumerate}

\item If you are using existing assets (e.g., code, data, models) or curating/releasing new assets, check if you include:
\begin{enumerate}
\item Citations of the creator If your work uses existing assets.
[Yes] Prior methods are cited. The numerical libraries are identified in the dependency and third-party notices. There is no third-party empirical dataset or pretrained model.
\item The license information of the assets, if applicable.
[Yes] The supplement identifies software dependencies and their licensing sources, retains third-party notices, and releases author-created materials under Creative Commons Attribution 4.0 International.
\item New assets either in the supplemental material or as a URL, if applicable.
[Yes] Synthetic records, verification programs and exact certificates accompany the submission; the analytic proofs are included in the paper's appendices.
\item Information about consent from data providers/curators.
[Not Applicable] The observations are mathematical examples and programmatically simulated counts; no external individual-level dataset is used.
\item Discussion of sensible content if applicable, e.g., personally identifiable information or offensive content.
[Not Applicable] The anonymous supplement contains synthetic research records, with no individual-level personal data. Author contact information is kept outside that supplement.
\end{enumerate}

\item If you used crowdsourcing or conducted research with human subjects, check if you include:
\begin{enumerate}
\item The full text of instructions given to participants and screenshots.
[Not Applicable] No participants were recruited and no crowdsourcing was conducted.
\item Descriptions of potential participant risks, with links to Institutional Review Board (IRB) approvals if applicable.
[Not Applicable] There was no human-subject experiment. All new observations are synthetic.
\item The estimated hourly wage paid to participants and the total amount spent on participant compensation.
[Not Applicable] No participants were recruited or compensated.
\end{enumerate}
\end{enumerate}

\clearpage\onecolumn\appendix\raggedbottom
The appendices follow the main argument. For the classification proof, read A for the aligned construction and B for the mismatched corner. C--D establish the population sign examples; E quantifies mismatch and verifies the outcome-specific comparison; F--G establish persistence and its computable certificate. H quantifies composition loss under positive drift. For implementation and statistical details, I documents the sampling experiment and J gives the calculation entry point and verification records. The supplement README indexes the code, numerical records and experiment protocols.

\section{Binary rectangularity with an aligned selection envelope}
\label{sec:aligned-envelope}

The construction has three steps: solve each environment separately, express its remaining joint-law freedom as an interval for $c=q_{11}$, and intersect these intervals. Alignment turns the cross-environment comparisons into separate bounds on the two means. The mean region is a projection of a convex polytope, so convex geometry provides the underlying framework \cite{rockafellar}.

Consider finitely many environments with binary treatment and
outcome, a common potential-outcome law $q$, and separate selection
masses $z_e$. The common-law assumption is exactly zero drift.
Suppose the primitive selection restrictions are
\[
 \ell q(t)\leq z_e(t)\leq u q(t)
 \quad\text{for all }e,t,\qquad 0<\ell<u<1,
\]
with the same absolute limits $\ell,u$ in every environment.
The observed propensities need not agree. Write their tables as
\[
 p_e=\begin{pmatrix}1-\pi_e-a_e&a_e\\\pi_e-b_e&b_e\end{pmatrix},
 \qquad \ell\leq\pi_e\leq u,
\]
where all cells are nonnegative. If any $\pi_e$ lies outside this
interval, the model is immediately infeasible by summing the
selection inequalities. No additional source-to-source restrictions
are imposed on the selection witnesses.

Define
\[
 r_e=a_e-b_e+\pi_e=p_e(0,1)+p_e(1,0),\qquad
 s_e=a_e+b_e=p_e(0,1)+p_e(1,1).
\]
Let $r_-,r_+,s_-,s_+$ denote their respective minima and maxima
over all environments, including the target. Define the local
rectangle intersection bounds
\begin{align*}
 L_0&=\max_e\max\left\{\frac{a_e}{1-\ell},
                      1-\frac{1-\pi_e-a_e}{1-u}\right\},&
 U_0&=\min_e\min\left\{\frac{a_e}{1-u},
                      1-\frac{1-\pi_e-a_e}{1-\ell}\right\},\\
 L_1&=\max_e\max\left\{\frac{b_e}{u},
                      1-\frac{\pi_e-b_e}{\ell}\right\},&
 U_1&=\min_e\min\left\{\frac{b_e}{\ell},
                      1-\frac{\pi_e-b_e}{u}\right\}.
\end{align*}

\begin{theorem}[Constructive aligned solution]\label{thm:rectangle}
The full sharp target mean region is
\[
 P=[L_0^*,U_0^*]\times[L_1^*,U_1^*],
\]
where
\begin{align*}
 L_0^*&=\max\left\{L_0,\frac{r_++s_+-u}{2-u-\ell}\right\},&
 U_0^*&=\min\left\{U_0,\frac{r_-+s_--\ell}{2-u-\ell}\right\},\\
 L_1^*&=\max\left\{L_1,\frac{s_+-r_-+\ell}{u+\ell}\right\},&
 U_1^*&=\min\left\{U_1,\frac{s_--r_++u}{u+\ell}\right\}.
\end{align*}
An interval with its lower endpoint above its upper endpoint is
empty. These formulas therefore also characterize feasibility.
If the model is nonempty, all four mean corners are attainable and
the sharp ATE interval is $[L_1^*-U_0^*,U_1^*-L_0^*]$.

Computing these extrema and constructing a compatible witness at any mean pair takes $O(K)$ arithmetic operations.
\end{theorem}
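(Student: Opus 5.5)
The proof follows the three-step outline of this appendix. We reduce the model to the variables $(m_0,m_1,c)$ via \eqref{eq:fiber-q}, describe each environment's feasible $c$ at fixed means by the fiber $J_e(m)$ of \eqref{eq:local-fiber}, and then project out $c$ by Fourier--Motzkin elimination.

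\textbf{Step 1: the local fiber.} Fix one environment $e$ and a law $q$. The decomposition $q=q_e^{\rm low}+q_e^{\rm high}$ with $z_e=\ell q_e^{\rm low}+uq_e^{\rm high}$ is a bijection between feasible $(q,z_e)$ and pairs of nonnegative measures on the four types. The groups have total masses $(u-\pi_e)/D$ and $(\pi_e-\ell)/D$.

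The observation equations \eqref{eq:obs} fix, for each group, the masses with $Y(0)=1$ and with $Y(1)=1$. These are linear in $(m_0,m_1)$ and the cells $a_e,b_e$. They come from solving the two linear equations that link $(p_e(0,1),p_e(1,1))$ to the groups' marginals together with $(m_0,m_1)$.

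Within a group with fixed total and marginals, the $11$ mass ranges over its Fr\'echet interval. Summing the two groups' intervals gives $J_e(m)$. The lower pieces are $f_-,A_e,B_e$ and the upper pieces are $f_+,C_e,H_e$. The local mean rectangle is exactly the condition that each group's marginals lie in $[0,\text{group total}]$. Its endpoints are the displayed $\max_e/\min_e$ formulas for $L_a,U_a$. This gives a pure linear-algebra verification of \eqref{eq:local-fiber}.

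\textbf{Step 2: elimination of $c$.} The mean pair $m$ is feasible iff $m$ lies in every local rectangle and every lower piece is at most every upper piece. The comparisons of $f_\pm$ with each other and with the local pieces of the same environment are implied by the local rectangle; these must be checked individually. The cross comparisons are the new content. With common $(\ell,u)$, each of them involves only one coordinate:
\begin{itemize}
\item $A_e\le H_f$ and $B_e\le C_f$ cancel the $m_1$ coefficient, because $u+(-u)$ and $-\ell+\ell$ vanish. They give bounds on $m_0$ of the form $(2-u-\ell)m_0\gtrless\cdots$.
\item $A_e\le C_f$ and $B_e\le H_f$ cancel $m_0$, since $-(1-u)+(1-u)=0$. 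They give $(u+\ell)m_1\le s_f-r_e+u$ and the analogous lower bound.
\end{itemize}
Taking the extremal $r_\pm,s_\pm$ produces exactly the four extra terms in $L_0^*,U_0^*,L_1^*,U_1^*$. I expect the comparisons $A_e\le H_f$ and $B_e\le C_f$ to reduce to $m_0\ge(r_++s_+-u)/(2-u-\ell)$ and $m_0\le(r_-+s_--\ell)/(2-u-\ell)$. The $f_\pm$-versus-foreign-piece comparisons, such as $A_e\le m_0$, should be dominated by local-rectangle constraints of $e$ and need to be verified. This sign bookkeeping is the main obstacle.

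Since every surviving constraint is coordinatewise, the projection is the product $[L_0^*,U_0^*]\times[L_1^*,U_1^*]$. It is empty iff some lower endpoint exceeds its upper endpoint, which characterizes feasibility. Rectangularity then yields attainability of all four corners and the ATE formula.

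\textbf{Step 3: witness and complexity.} For any $m$ in the rectangle, take $c=\max$ of the lower pieces, which is at most the minimum of the upper pieces by Step 2. In each environment, split $c$ between the two groups within their Fr\'echet intervals, for example greedily filling the low group up to its upper limit. Then reconstruct $q_e^{\rm low},q_e^{\rm high}$, $q=q(m,c)$ and $z_e$. The same $q$ arises in every environment because it depends only on $(m,c)$.

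Zero-mass groups, which occur when $\pi_e\in\{\ell,u\}$, are handled by the degenerate interval $\{0\}$. Each endpoint is a maximum or minimum of $O(K)$ affine values, and each witness uses $O(1)$ work per environment, so the total cost is $O(K)$.
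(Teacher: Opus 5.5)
Your proposal is correct and follows the paper's proof essentially step for step: the low/high endpoint-score decomposition, the two Fr\'echet intervals summed into $J_e(m)$, the four cross comparisons collapsing to single-coordinate bounds under alignment, and the common-$c$ reconstruction with its $O(K)$ count. The step you flag as the main obstacle is actually immediate. Because $f_\pm$ do not depend on the environment, a comparison such as $A_e\le m_0$ is a within-environment one, and it holds on $R_e$ because $J_e(m)$ is a sum of two Fr\'echet intervals, each nonempty whenever its group marginals lie in $[0,\text{group total}]$.
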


\begin{proof}
Put $D=u-\ell$. For fixed means $m_0,m_1$, any compatible law has
the form
\[
 q=(1-m_0-m_1+c,\ m_1-c,\ m_0-c,\ c)
\]
in type order $00,01,10,11$. Decompose a local selection witness as
\[
 q_e^{\rm low}=(u q-z_e)/D,\qquad
 q_e^{\rm high}=(z_e-\ell q)/D.
\]
The two nonnegative measures have masses
$(u-\pi_e)/D$ and $(\pi_e-\ell)/D$. Their two marginal outcome-one
masses are
\[
 \left(\frac{a_e-(1-u)m_0}{D},
       \frac{u m_1-b_e}{D}\right),\qquad
 \left(\frac{(1-\ell)m_0-a_e}{D},
       \frac{b_e-\ell m_1}{D}\right).
\]
Requiring each marginal to lie between zero and its group's total
mass gives exactly the local rectangle whose endpoints enter
$L_a,U_a$. This remains valid when a group has zero total mass.

Within either group, its binary $11$ mass ranges between the
Fr\'echet lower and upper bounds. Adding these two intervals and
expanding gives the exact local fiber
\[
 J_e(m)=\left[
   \max\{0,m_0+m_1-1,A_e,B_e\},
   \min\{m_0,m_1,C_e,H_e\}\right],
\]
for means in its local rectangle, where
\begin{align*}
 A_e&=\frac{r_e-(1-u)m_0+u m_1-u}{D},&
 B_e&=\frac{-r_e+(1-\ell)m_0-\ell m_1+\ell}{D},\\
 C_e&=\frac{s_e-(1-u)m_0-\ell m_1}{D},&
 H_e&=\frac{-s_e+(1-\ell)m_0+u m_1}{D}.
\end{align*}
In particular each local rectangle is sharp. Local witnesses with
a common $c$ can be assembled independently into the full model.

For means in every local rectangle, intersecting these fibers
requires only the additional comparisons of each $A_e,B_e$ against
each $C_f,H_f$. Comparisons with the common lower pieces
$0,m_0+m_1-1$ or common upper pieces $m_0,m_1$ already hold within
an appropriate local fiber. The four remaining comparisons are
\begin{align*}
 A_e\leq C_f&\iff (u+\ell)m_1\leq s_f-r_e+u,\\
 A_e\leq H_f&\iff (2-u-\ell)m_0\geq r_e+s_f-u,\\
 B_e\leq C_f&\iff (2-u-\ell)m_0\leq r_e+s_f-\ell,\\
 B_e\leq H_f&\iff (u+\ell)m_1\geq s_f-r_e+\ell.
\end{align*}
Both coefficients multiplying a mean are positive. Taking extrema
over all ordered pairs $(e,f)$ yields precisely the four displayed
extra scalar bounds. If they hold, choose any common $c$ in the
fiber intersection; the two-group construction supplies compatible
selection masses for each environment. This proves equality of
the feasible mean region with the asserted rectangle, and linear
optimization over it gives the ATE interval.
\end{proof}

\paragraph{Corollary for the symmetric joint sensitivity model.}
In the finite symmetric pair-selection JSM, if all environments
have common $\pi\in(0,1)$ and common $\Gamma>1$, their selection
limits align as
\[
 \ell=\frac{\pi}{\pi+\Gamma(1-\pi)},\qquad
 u=\frac{\Gamma\pi}{1-\pi+\Gamma\pi}.
\]
Substituting $r_e=(a_e-b_e)+\pi$ recovers the common-propensity
rectangle and its explicit bounds. The log-odds midpoint and half-width identities following
the JSM consequence of Theorem~\ref{thm:universal} show that aligned finite symmetric-JSM limits
force both common propensity and common sensitivity parameter.
The primitive-envelope result above permits varying propensities
without asserting that extension for the symmetric JSM.

\section{Alignment is necessary for universal marginal composition}
\label{sec:universal-necessity}

The sufficiency result in Section~\ref{sec:aligned-envelope} has a
converse when all selection envelopes are strict and noncollapsed.
At any fixed mismatched design, we construct positive observations
with a missing ATE corner. The proof first chooses two unequal
envelopes, then makes one cross-fiber comparison active while all
other constraints have slack. Two small moves along that boundary
provide separately attainable extremes whose combination violates it.

Let $\mathcal E$ be a nonempty finite set of environments, with a
designated target. Treatment and outcome are binary. Fix
\[
 0<\ell_i<u_i<1,\qquad \ell_i<\pi_i<u_i
 \quad(i\in\mathcal E).
\]
These are primitive selection limits and prescribed observed
propensities. At zero drift there is one common response-type law
$q=(q_{00},q_{01},q_{10},q_{11})$, and each environment has its own
selection masses satisfying
\[
 \ell_iq_t\leq z_{it}\leq u_iq_t.
\]
There are no other restrictions coupling the selection witnesses.
For observed tables $p=(p_i)_{i\in\mathcal E}$ with these
propensities, let $P(p)$ be the full feasible region of
\[
 m_0=q_{10}+q_{11},\qquad m_1=q_{01}+q_{11}.
\]
For a compatible $p$, write
\[
 L_0(p)=\min_{m\in P(p)}m_0,\qquad
 U_1(p)=\max_{m\in P(p)}m_1,\qquad
 U_\tau(p)=\max_{m\in P(p)}(m_1-m_0).
\]
All three extrema are attained by compactness.

\paragraph{Theorem (universal composition boundary).}
For the fixed design above, the following statements are equivalent.
\begin{enumerate}
 \item The envelopes are aligned:
       $(\ell_i,u_i)=(\ell_j,u_j)$ for every $i,j$.
 \item For every compatible observed table collection with
       propensities $(\pi_i)$, the full region $P(p)$ is a rectangle.
 \item For every compatible \emph{strictly positive} observed table
       collection with propensities $(\pi_i)$,
       $U_\tau(p)=U_1(p)-L_0(p)$.
\end{enumerate}
More specifically, if any two envelopes differ, there are strictly
positive compatible tables with the prescribed propensities and
\begin{equation}\label{eq:universal-strict-gap}
 U_1(p)-L_0(p)-U_\tau(p)>0.
\end{equation}
The statement includes any finite number of environments. Alignment
is vacuous when there is only one.

\begin{proof}
Aligned-envelope sufficiency is proved in
Section~\ref{sec:aligned-envelope}; it does not require equal
propensities. A nonempty rectangle attains its coordinate corners,
so the first statement implies the second and the second implies
the third. We prove the converse by constructing
\eqref{eq:universal-strict-gap} whenever a distinct pair exists.

\emph{Local fibers with environment-specific limits.}
Put $D_i=u_i-\ell_i$, $a_i=p_i(0,1)$, and $b_i=p_i(1,1)$, so
\[
 p_i=\begin{pmatrix}1-\pi_i-a_i&a_i\\\pi_i-b_i&b_i\end{pmatrix}.
\]
Apply the two-group decomposition in Appendix~\ref{sec:aligned-envelope}
separately with limits $(\ell_i,u_i)$. Its group masses are
$w_i^{\rm low}=(u_i-\pi_i)/D_i$ and
$w_i^{\rm high}=(\pi_i-\ell_i)/D_i$; its group marginal formulas
are the displayed aligned formulas with $\ell,u,D$ replaced by
$\ell_i,u_i,D_i$. Write those marginals as $x_{ai}^v$.
The conditions $0\leq x_{ai}^v\leq w_i^v$ define the local mean
rectangle $R_i$. The binary coupling intervals of the two groups
sum to the exact local dependence fiber
\begin{equation}\label{eq:universal-local-fiber}
 J_i(m)=\left[
 \max\{0,m_0+m_1-1,A_i,B_i\},
 \min\{m_0,m_1,C_i,H_i\}\right],\qquad m\in R_i,
\end{equation}
where
\begin{align*}
 A_i&=\frac{a_i-b_i-(1-u_i)m_0+u_i m_1-u_i+\pi_i}{D_i},&
 B_i&=\frac{-a_i+b_i+(1-\ell_i)m_0-\ell_i m_1-\pi_i+\ell_i}{D_i},\\
 C_i&=\frac{a_i+b_i-(1-u_i)m_0-\ell_i m_1}{D_i},&
 H_i&=\frac{-a_i-b_i+(1-\ell_i)m_0+u_i m_1}{D_i}.
\end{align*}
As in the aligned case, every point of this interval is attainable
by selecting the two group coupling masses. Consequently
$m\in P(p)$ if and only if $m\in\bigcap_iR_i$ and all the intervals
$J_i(m)$ have a common point. This is a finite system of affine
inequalities comparing every lower piece with every upper piece.

\emph{Choice of the pair and the common law.}
Define
\[
 \alpha_i=\frac{1-u_i}{D_i},\qquad
 \beta_i=\frac{u_i}{D_i},\qquad
 \omega_i=\frac{\pi_i-\ell_i}{D_i}\in(0,1).
\]
The pair $(\alpha_i,\beta_i)$ determines the envelope, since
$\alpha_i+\beta_i=1/D_i$, $u_i=\beta_iD_i$, and
$\ell_i=u_i-D_i$. Thus either some pair has different $\alpha$
values, or a pair with equal $\alpha$ has different $\beta$ values.
Choose $e,f$ with $\alpha_e>\alpha_f$ in the first case, and with
$\alpha_e=\alpha_f$, $\beta_e>\beta_f$ in the second.

Choose
\[
 0<\varepsilon<
 \min\{\omega_e,1-\omega_e,\omega_f,1-\omega_f\}
\]
and set
\begin{equation}\label{eq:universal-common-law}
 q^*=(\omega_e-\varepsilon,\ \varepsilon,\ \varepsilon,
                    1-\omega_e-\varepsilon).
\end{equation}
All its coordinates are positive. In environment $e$, use the
selection probabilities
\[
 h_e=\left(u_e,\frac{\ell_e+u_e}{2},
                \frac{\ell_e+u_e}{2},\ell_e\right).
\]
Their $q^*$-weighted average is exactly $\pi_e$. Put
\[
 t_f=\frac{\pi_f-\varepsilon(\ell_f+u_f)}{1-2\varepsilon}.
\]
The denominator is positive, and the two inequalities
$t_f>\ell_f$ and $t_f<u_f$ are equivalent to
$\varepsilon<\omega_f$ and $\varepsilon<1-\omega_f$, respectively.
In the first case choose
\[
 h_f=(t_f,\ell_f,u_f,t_f),
\]
and in the second choose
\[
 h_f=(t_f,u_f,\ell_f,t_f).
\]
Both patterns have $q^*$-weighted average $\pi_f$. Every remaining
environment $i$ uses the constant selection probability $h_i=\pi_i$.
Let $z_{it}=h_{it}q_t^*$ and generate the observed cells from
$(q^*,z_i)$. These tables are compatible, have exactly the
prescribed propensities, and have strictly positive cells because
$q_t^*>0$ and $0<h_{it}<1$.

\emph{A unique active cross comparison.}
Let $m^*$ be the means of $q^*$ and $c^*=q_{11}^*$. For any of
these decomposed witnesses, the pieces in
\eqref{eq:universal-local-fiber}, evaluated at $m^*$, satisfy
\begin{align*}
 A_i&=c^*-(q_i^{\rm high})_{11}-(q_i^{\rm low})_{00},&
 B_i&=c^*-(q_i^{\rm low})_{11}-(q_i^{\rm high})_{00},\\
 C_i&=c^*+(q_i^{\rm low})_{10}+(q_i^{\rm high})_{01},&
 H_i&=c^*+(q_i^{\rm low})_{01}+(q_i^{\rm high})_{10}.
\end{align*}
In environment $e$, the only zero cells of the two groups are low
$00$ and high $11$. Thus $A_e=c^*$, and every other lower piece is
strictly below $c^*$, while every upper piece is strictly above it.
For environment $f$ in the first case, the only zero cells are low
$10$ and high $01$: here $C_f=c^*$ is the unique tight upper piece.
In the second case the zero cells are low $01$ and high $10$, so
$H_f=c^*$ is uniquely tight instead. Every lower piece of $f$ is
strictly below $c^*$. Both groups in every remaining environment
have all cells positive, hence all their fiber comparisons are
strict at $c^*$.

The common pieces are strict as well:
$0<c^*$, $m_0^*+m_1^*-1=c^*-q_{00}^*<c^*$, and
$m_a^*>c^*$ for $a=0,1$. In every group, each outcome marginal and
its complement contain a positive cell, including the groups with
one zero cell. Therefore
$0<x_{ai}^v<w_i^v$: the point $m^*$ is strictly inside every local
rectangle. It follows that the sole active inequality in the full
mean system is $A_e\leq C_f$ in the first case, or $A_e\leq H_f$
in the second.

\emph{Orientation and the upper composition gap.}
The active inequality can be written globally as
\begin{equation}\label{eq:universal-oblique-facet}
 -a(m_0-m_0^*)+b(m_1-m_1^*)\leq0,
 \qquad a,b>0.
\end{equation}
Indeed, its coefficients are
\[
 (a,b)=
 \begin{cases}
 (\alpha_e-\alpha_f,\ \beta_e+\beta_f-1),
       &\alpha_e>\alpha_f,\\
 (\alpha_e+\alpha_f+1,\ \beta_e-\beta_f),
       &\alpha_e=\alpha_f,\ \beta_e>\beta_f.
 \end{cases}
\]
They are positive; in particular $\beta_i>1$ because $\ell_i>0$.
All other affine inequalities are strict at $m^*$, so for some
$t>0$ both $m^*-t(b,a)$ and $m^*+t(b,a)$ belong to $P(p)$.
They remain on \eqref{eq:universal-oblique-facet}, and sufficiently
small displacements preserve every other inequality. Hence
\[
 L_0(p)\leq m_0^*-tb,\qquad
 U_1(p)\geq m_1^*+ta.
\]
The violation of \eqref{eq:universal-oblique-facet} at the marginal
corner $(L_0(p),U_1(p))$ is therefore at least $2abt>0$.
For any feasible $m$, subtracting its valid inequality from this
corner violation gives
\[
 a\{m_0-L_0(p)\}+b\{U_1(p)-m_1\}\geq2abt.
\]
Both braces are nonnegative by definition of the coordinate
extrema. Consequently
\[
 \{U_1(p)-L_0(p)\}-(m_1-m_0)
 \geq\frac{2abt}{\max(a,b)}
 =2\min(a,b)t>0.
\]
Maximizing $m_1-m_0$ proves \eqref{eq:universal-strict-gap}, completing
the contrapositive and the equivalence.
\end{proof}

\paragraph{Corollary (symmetric JSM designs).}
Fix interior propensities $\pi_i\in(0,1)$ and finite sensitivity
parameters $\Gamma_i>1$ in the symmetric pair-selection JSM.
At zero drift with binary outcomes, universal full-model
rectangularity, and equivalently universal upper marginal
composition, holds if and only if all $\pi_i$ agree and all
$\Gamma_i$ agree.

\begin{proof}
The selection endpoints satisfy
\[
 \operatorname{logit}\pi_i
  =\tfrac12\{\operatorname{logit}\ell_i+
             \operatorname{logit}u_i\},\qquad
 \log\Gamma_i
  =\tfrac12\{\operatorname{logit}u_i-
             \operatorname{logit}\ell_i\}.
\]
Thus aligned endpoints are equivalent to common propensity and
common sensitivity parameter. The theorem applies because its
construction preserves the prescribed propensities exactly.
\end{proof}

\paragraph{Scope and boundary cases.}
In particular, any fixed nonzero propensity mismatch at common
$\Gamma>1$, or any fixed nonzero sensitivity mismatch at common
interior propensity, permits strictly positive compatible tables
with a strict upper composition gap. This remains an existence
statement for arbitrarily small mismatches; no positive gap uniform
as the mismatch vanishes is asserted.

The strict noncollapsed assumption cannot simply be dropped. If an
environment has $\Gamma_i=1$, it identifies both common means from
its own observed table. A compatible zero-drift model then has a
singleton mean region, regardless of other envelopes. Propensity
boundaries and other degenerate envelopes require separate
statements.

Finally, the witness \eqref{eq:universal-common-law} has
$q_{01}^*=q_{10}^*$ and hence effect zero. The construction establishes nonrectangularity and a strict composition
gap. Strict sign identification is studied through the families in
Appendices~\ref{sec:mirror-design-information}--\ref{app:decision-information}.

\section{A design class with indistinguishable mean summaries and different signs}
\label{sec:mirror-design-information}

This appendix supplies the explicit observations, attaining models and stability proof for Theorem~\ref{thm:design-information}. The target is environment zero, and the full common-law model imposes no restriction fixing a dependence coordinate or a score beyond its envelope.

\begin{theorem}[Indistinguishable summaries in a mirror class]
\label{thm:design-information}
Fix two environments with $(\ell_0,u_0,\pi_0)=(l,u,p)$ and $(\ell_1,u_1,\pi_1)=(1-u,1-l,1-p)$, where
\[
\begin{gathered}
 0<l<1/5,\quad l<u<9/20,\\
 1/3<(p-l)/(u-l)<4/9.
\end{gathered}
\]
Put $k=(p-l)/(1-l)$. There is a computable $\epsilon_0>0$ such that every $0<\epsilon\le\epsilon_0$ admits strictly positive compatible tables $p(\epsilon,t)$, $|t|\le\epsilon/2$, with
\begin{align*}
 (L_0,U_0)&=(1/2,1/2+k),\\
 (L_1,U_1)&=(1/2-k+\epsilon,1/2+\epsilon),\\
 I(p(\epsilon,t))&=[-2k+\epsilon,t].
\end{align*}
At fixed $\epsilon$, identical summaries conceal strict negativity at $t=-\epsilon/2$ and admission of both signs at $t=\epsilon/2$. This distinction holds on paired observation neighborhoods and on an open neighborhood of each such design in the full six-dimensional strict-design space.
\end{theorem}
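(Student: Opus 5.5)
The plan is to build the tables explicitly from a parametrized common law and to compute the full region $P(p)$ through the fiber description of Appendix~\ref{sec:universal-necessity}. Membership $m\in P(p)$ holds exactly when $m\in R_0\cap R_1$ and every lower fiber piece of one environment lies below every upper piece of the other.

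\textbf{Step 1: exploit the mirror symmetry.} The mirror envelope $(1-u,1-l,1-p)$ is the image of $(l,u,p)$ under relabelling treatment $A\mapsto1-A$. Under that relabelling the roles of the two potential outcomes are swapped in how scores act. Environment~1's fiber pieces $A_1,B_1,C_1,H_1$ are then the pieces of environment~0 with $(m_0,m_1)$ and the rows of its table exchanged. Writing both in the slope parameters $\alpha_i,\beta_i$ gives $\alpha_1=\beta_0-1$ and $\beta_1=\alpha_0+1$. Consequently the cross comparisons $A_0\le C_1$, $A_0\le H_1$, and their mirrors have explicitly known oblique normals. From the formulas in the proof of the universal boundary, I expect exactly one oblique comparison, say $A_0\le C_1$ with normal $(\alpha_0-\alpha_1,\beta_0+\beta_1-1)$, to cut the upper-left corner. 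The remaining comparisons give coordinate bounds.

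\textbf{Step 2: choose the tables.} I would take a base common law centred at $m=(1/2,1/2)$ with endpoint scores, as in the construction \eqref{eq:universal-common-law}, arranged so that the local rectangles have coordinate endpoints $L_0=1/2$, $U_0=1/2+k$, $L_1=1/2-k+\epsilon$ and $U_1=1/2+\epsilon$. The quantity $k=(p-l)/(1-l)$ arises as the width $a_e/(1-l)$ versus $a_e/(1-u)$-type difference for a group of mass $\omega=(p-l)/(u-l)$. The numerical window $1/3<\omega<4/9$, together with $l<1/5$ and $u<9/20$, is what I expect to guarantee two things: the coordinate bounds from the nonoblique cross comparisons are slack, and all cells stay positive. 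The parameter $\epsilon$ shifts the $m_1$ column. The parameter $t$ then perturbs one cell of the target table (as in Proposition~\ref{thm:insufficiency}) so that the intercept of the single oblique facet moves by $t$ while the four coordinate endpoints remain fixed. I would verify this by checking that every vertex of $P$ lying on a coordinate-extremal line is not incident to the moving facet. With this, $U_\tau=\max(m_1-m_0)=t$ is attained at a vertex on the oblique facet, and $L_\tau=L_1-U_0=-2k+\epsilon$ is attained at the opposite rectangle corner, which remains feasible.

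\textbf{Step 3: extract attaining models and stability.} Attaining models come from choosing a common $c$ in $\bigcap_iJ_i(m)$ and the two-group reconstruction of Theorem~\ref{thm:rectangle}. For robustness, every endpoint is given by a fixed LP basis whose active constraints are affine in the table cells and in $(l,u,p)$ and the mirror parameters. All inactive constraints hold strictly, with slack bounded below by a computable multiple of $\epsilon$ or of the window margins. Small perturbations of the six design parameters and of the cells therefore keep the bases optimal and the endpoints continuous. The sign gap $U_\tau=\pm\epsilon/2$ then survives on neighbourhoods of radius proportional to $\epsilon$, and $\epsilon_0$ is defined as the minimum of these explicit slacks.

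\textbf{Main obstacle.} The hardest step is the bookkeeping in Step~2. It requires showing that, throughout the parameter window, the only active oblique constraint near the upper-left corner is the designated one, and that the perturbation by $t$ leaves all four marginal extrema unchanged. I would first do this symbolically at $\epsilon=t=0$, where the symmetry makes many slacks equal, and then treat $\epsilon$ and $t$ as perturbations.
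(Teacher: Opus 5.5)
Your architecture matches the paper's. Both use explicit tables built around a base law with means $(1/2,1/2)$ and the exact half-plane system from the local fibers. Both have a single active row $A_0\le C_1$; your $\alpha_1=\beta_0-1$, $\beta_1=\alpha_0+1$ are correct and give, after scaling by $D$, the row $-\sigma m_0+m_1\le d(y)$ with $\sigma=1-u-l$ and $d(y)=u-p-a_0+b_0+a_1+b_1$. In both, $t$ enters through $b_0$, and $\epsilon_0$ and the neighbourhoods come from strict slacks.

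Step~2, however, would not go through as written, for two reasons.

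First, it is false that the remaining comparisons give coordinate bounds. In the mirror design all eight cross-environment rows are oblique, with normals proportional to $(\pm\sigma,\pm1)$ or $(\pm1,\pm\sigma)$; only the $e=f$ comparisons restrict a single mean. So seven other oblique rows must be shown strictly slack at every witness. This matters most at $(1/2+k,1/2-k+\epsilon)$, which attains $U_0$, $L_1$ and the lower ATE. That is where the window $1/3<\rho<4/9$, $l<1/5$, $u<9/20$ is actually used, through bounds such as $D/\{4(1-2l)\}<k<D/(4u)$ and $k<D/\{2(l+u)\}$.

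Second, your incidence test is backwards. The witnesses for $L_0$ (jointly with $U_\tau$) and for $U_1$ are the vertices $(1/2,1/2+t)$ and $(1/2+(\epsilon-t)/\sigma,1/2+\epsilon)$. Both lie on coordinate-extremal lines \emph{and} on the moving facet. Moreover, $t$ does not move just one facet: it moves every row involving $b_0$, namely all of $A_0,B_0,C_0,H_0$ and the target's treatment-one rectangle rows.

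Argue two-sidedly instead:
\begin{itemize}
\item The single-environment rows give $a_0/(1-l)\le m_0\le a_0/(1-l)+k$ and $b_1/(1-l)\le m_1\le b_1/(1-l)+k$, and none of these involves $b_0$.
\item On the tables $d(y)=(l+u)/2+t$. The oblique row with $m_0\ge 1/2$ then gives $m_1-m_0\le d(y)-(l+u)m_0\le t$.
\item Finally, exhibit feasible witnesses attaining each bound.
\end{itemize}
This also explains why $\epsilon>0$ and $|t|\le\epsilon/2$ are needed. The facet must cut both edges at the corner $(1/2,1/2+\epsilon)$ within their extent. At $\epsilon=0$ and $t>0$ the corner $(1/2,1/2)$ is feasible, so $U_\tau=0\neq t$.

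The base construction you point to is also not the right one. The witness \eqref{eq:universal-common-law} has means $(1-\rho,1-\rho)$ and lies strictly inside both local rectangles with only $A_0\le C_1$ tight, so it pins no coordinate endpoint. You need a point where target lower $m_0$, source upper $m_1$ and $A_0\le C_1$ are tight simultaneously. The paper takes $q=(1,1,1,1)/4$ with scores $h_0=(u,\,l+(4\rho-1)D,\,l,\,l)$ and $h_1=(1-l,\,1-u,\,1-l,\,1-u+(2-4\rho)D)$:
\begin{itemize}
\item the vanishing target high-group cells $10,11$ make lower $m_0$ tight;
\item the vanishing source low-group cells $00,10$ make upper $m_1$ tight;
\item the zero target low-$00$/high-$11$ and source low-$10$/high-$01$ cells give $A_0=C_1=q_{11}$.
\end{itemize}
The free scores are interior precisely because $1/4<\rho<1/2$. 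A second base point $W=(1/2+k,1/2-k)$ then requires checking all thirty-two slacks.

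Finally, the paired-neighbourhood claim needs \emph{identical} summaries, not merely continuous endpoints. Make explicit that the endpoints are $L,L+k,R,R+k$ with $L=a_0/(1-l)$ and $R=b_1/(1-l)$, while the pair differs only in $b_0$. A common perturbation therefore keeps the summaries equal while $U_\tau=u-p-\frac{1+u}{1-l}a_0+b_0+a_1+b_1$ differs by $\epsilon$. Also cap $\epsilon_0$ by $k/2$ so that $-2k+\epsilon<0$; this is what lets the $t=\epsilon/2$ branch admit both signs. With these in place, plain continuity at the fixed pair $y_\pm$ suffices for the design neighbourhood. That is slightly simpler than the paper's re-centring of $b_0$ at the zero of $U_\tau$.
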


The range for $\rho=(p-l)/(u-l)$ is sufficient for strictly positive witness slacks. The class includes common-$\Gamma$ JSM designs with $\pi=(p,1-p)$, $p\in[1/5,27/100]$ and $\Gamma\in[9/5,11/5]$. On every fixed-$\epsilon$ family, any decoder of the four endpoints has worst-case upper-endpoint error at least $\epsilon/2$. The proof below gives the tables, witnesses and positive-radius boundary limits at fixed interior $l,u$.

\paragraph{Design class.}
Let
\begin{equation}\label{eq:mirror-design-class}
 0<l<\frac15,\qquad l<u<\frac9{20},\qquad
 \frac13<\rho:=\frac{p-l}{u-l}<\frac49.
\end{equation}
Use primitive envelopes and prescribed propensities
\[
 (\ell_0,u_0,\pi_0)=(l,u,p),\qquad
 (\ell_1,u_1,\pi_1)=(1-u,1-l,1-p).
\]
This is a three-parameter class of strict nonaligned designs. It
contains the manuscript's design $(l,u,p)=(1/7,2/5,1/4)$, for which
$\rho=5/12$. It also intersects the symmetric JSM in an open set of
its two free parameters $(p,\Gamma)$; the inequalities above, rather
than a particular value of $\Gamma$, determine that intersection.
Write
\[
 D=u-l,\qquad k=\frac{p-l}{1-l}>0,\qquad \sigma=1-u-l>0.
\]

\paragraph{Explicit observation family.}
For every design satisfying \eqref{eq:mirror-design-class}, there is
a computable $\epsilon_0>0$ such that, for all
$0<\epsilon\leq\epsilon_0$ and $|t|\leq\epsilon/2$, the tables
\begin{align*}
 p_0(\epsilon,t)&=
 \begin{pmatrix}1-p-a&a\\p-b-t&b+t\end{pmatrix},&
 p_1(\epsilon)&=
 \begin{pmatrix}p-b&b\\1-p-c&c\end{pmatrix},\\
 a&=\frac{1-l}{2},&
 b&=p-\frac{u+l}{4}-\frac{1-l}{2}\epsilon,\\
 c&=\frac{1+l}{2}-p+(1-l)\epsilon
 \end{align*}
are strictly positive and compatible at their prescribed propensities.
Their four full-model sharp mean endpoints are
\[
 (L_0,U_0,L_1,U_1)
 =\left(\frac12,\frac12+k,\frac12-k+\epsilon,
                  \frac12+\epsilon\right),
\]
whereas the complete sharp ATE interval is
\[
 I(p_0(\epsilon,t),p_1(\epsilon))
   =[-2k+\epsilon,t].
\]
At each fixed $\epsilon$, the four endpoints are identical across
$t$. Thus $t=-\epsilon/2$ identifies a strictly negative ATE and
$t=\epsilon/2$ admits both signs. The two conclusions persist on
paired relatively open four-dimensional observation neighborhoods.
Moreover this same-summary decision failure holds for every design
in an open neighborhood, in the full six-dimensional space of
strict two-environment designs, of each design in
\eqref{eq:mirror-design-class}.

\begin{corollary}[An explicit symmetric-JSM parameter rectangle]
\label{cor:mirror-jsm-rectangle}
The theorem applies to every common-$\Gamma$ symmetric JSM design
with propensities $(p,1-p)$ satisfying
\[
 \frac15\leq p\leq\frac{27}{100},\qquad
 \frac95\leq\Gamma\leq\frac{11}{5}.
\]
Thus same-summary sign ambiguity is proved on an entire rectangle
of sensitivity parameters and prescribed treatment propensities.
\end{corollary}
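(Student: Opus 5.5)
The plan is to verify directly that every design in the stated rectangle belongs to the mirror class \eqref{eq:mirror-design-class}; Theorem~\ref{thm:design-information} then applies verbatim. The verification has three steps.

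\emph{Step 1: mirror structure.} Write $o=p/(1-p)$ for the target odds. By \eqref{eq:msm}, environment $0$ has
\[
 l=\frac{o}{o+\Gamma},\qquad u=\frac{\Gamma o}{1+\Gamma o}.
\]
Substituting $\pi_1=1-p$, whose odds are $1/o$, gives
\[
 \ell_1=\frac{1}{1+\Gamma o}=1-u,\qquad u_1=\frac{\Gamma}{o+\Gamma}=1-l.
\]
So the design is exactly $(\ell_0,u_0,\pi_0)=(l,u,p)$ and $(\ell_1,u_1,\pi_1)=(1-u,1-l,1-p)$, and the strict ordering $l<p<u$ holds because $\Gamma>1$.

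\emph{Step 2: the bounds on $l$ and $u$.} I use monotonicity. The lower limit $l$ increases in $p$ and decreases in $\Gamma$, so its maximum on the rectangle is at $(p,\Gamma)=(27/100,9/5)$, where
\[
 l=\frac{27}{27+131.4}\approx0.1705<\frac15 .
\]
The upper limit $u$ increases in both arguments, so its maximum is at $(27/100,11/5)$, where
\[
 u=\frac{0.594}{1.324}\approx0.4486<\frac9{20}.
\]
The other required inequalities, $l>0$ and $u>l$, are immediate.

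\emph{Step 3: the bounds on $\rho$.} This is the only place needing care, and the key is a closed form. From $p=o/(1+o)$ one gets
\[
 p-l=\frac{o(\Gamma-1)}{(1+o)(o+\Gamma)},\qquad
 u-l=\frac{o(\Gamma^2-1)}{(1+\Gamma o)(o+\Gamma)},
\]
and therefore
\[
 \rho=\frac{1+\Gamma o}{(\Gamma+1)(1+o)}.
\]
Since $\Gamma>1$, $\rho$ is increasing in $o$, hence in $p$. Its $\Gamma$-derivative has the sign of $o-1$, which is negative because $p<1/2$, so $\rho$ is decreasing in $\Gamma$. The extremes over the rectangle are therefore at two corners:
\begin{align*}
 \rho_{\min}&=\rho(1/5,11/5)=\frac{1.55}{4}=\frac{31}{80}>\frac13,\\
 \rho_{\max}&=\rho(27/100,9/5)=\frac{121.6}{280}=\frac{152}{350}<\frac49 .
\end{align*}

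I expect the main obstacle to be the margins rather than the method: the bound $u<9/20$ holds by less than $0.002$. I will therefore confirm the corner values in exact rational arithmetic, for instance $u=297/662<9/20$ and $l=135/792<1/5$, instead of relying on decimals.

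With the three steps done, every design in the rectangle satisfies \eqref{eq:mirror-design-class}. Theorem~\ref{thm:design-information} then supplies $\epsilon_0$, the tables, and the same-summary sign ambiguity on the whole parameter rectangle.
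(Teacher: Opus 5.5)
Your proposal is correct and follows essentially the same route as the paper: both check membership in the mirror class using the closed form for $\rho$ (your odds expression $(1+\Gamma o)/\{(\Gamma+1)(1+o)\}$ equals the paper's $(1+(\Gamma-1)p)/(\Gamma+1)$) and monotonicity with corner substitution, giving $l\le 15/88$, $u\le 297/662$ and $31/80\le\rho\le 76/175$. Your explicit derivation that the complementary propensity $1-p$ yields the envelope $(1-u,1-l)$ spells out a step the paper only asserts.
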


\begin{proof}
For a symmetric JSM,
$l=p/\{\Gamma-(\Gamma-1)p\}$ and
$u=\Gamma p/\{1+(\Gamma-1)p\}$, while the complementary
propensity has endpoints $(1-u,1-l)$. Direct cancellation gives
\[
 \rho=\frac{p-l}{u-l}
      =\frac{1+(\Gamma-1)p}{\Gamma+1}.
\]
On the stated rectangle, monotonicity and endpoint substitution give
\[
 l\leq\frac{15}{88}<\frac15,\qquad
 u\leq\frac{297}{662}<\frac9{20},\qquad
 \frac{31}{80}\leq\rho\leq\frac{76}{175},
\]
and $1/3<31/80<76/175<4/9$. All primitive conditions hold.
\end{proof}

\subsection{Proof: local bounds and three attaining mean witnesses}

For arbitrary nearby observations write
$y=(a_0,b_0,a_1,b_1)$. The following coordinate bounds are valid
in the full model, by the indicated single-environment score limits:
\begin{equation}\label{eq:mirror-local-four}
 L(y)=\frac{a_0}{1-l}\leq m_0\leq L(y)+k,
 \qquad
 R(y)=\frac{b_1}{1-l}\leq m_1\leq R(y)+k.
\end{equation}
The target supplies the two treatment-zero bounds, and the source
supplies the two treatment-one bounds. These are not asserted sharp
until the witnesses below have been checked.

The cross-fiber comparison $A_0\leq C_1$, multiplied by $D$, is
\begin{equation}\label{eq:mirror-oblique}
 -\sigma m_0+m_1\leq d(y),\qquad
 d(y)=u-p-a_0+b_0+a_1+b_1.
\end{equation}
For the displayed tables $L(y)=1/2$, $R(y)=1/2-k+\epsilon$ and
$d(y)=(l+u)/2+t$. Combining \eqref{eq:mirror-oblique} with
$m_0\geq1/2$ gives
\[
 m_1-m_0\leq d(y)-(1-\sigma)m_0\leq t.
\]
The coordinate bounds give the lower ATE bound $-2k+\epsilon$.
The required three mean witnesses are
\begin{equation}\label{eq:mirror-three-witnesses}
 v^\tau=(1/2,1/2+t),\qquad
 v^H=(1/2+(\epsilon-t)/\sigma,1/2+\epsilon),\qquad
 v^-=(1/2+k,1/2-k+\epsilon).
\end{equation}
The first attains the lower $m_0$ and the upper ATE. The second
attains the upper $m_1$. The third simultaneously attains the upper
$m_0$, lower $m_1$, and lower ATE. Proving their feasibility proves
all six extrema over the same full model.

\subsection{The two base points and their strict slacks}

At $\epsilon=t=0$, the observation coordinates are
\[
 y_*=( (1-l)/2,z,z,(1+l)/2-p),\qquad z=p-(u+l)/4.
\]
The first two witnesses coincide at $V=(1/2,1/2)$, while the third
is $W=(1/2+k,1/2-k)$.

For $V$, take the strictly positive common law $q=(1,1,1,1)/4$.
The two score vectors are
\[
 h_0=(u,l+(4\rho-1)D,l,l),\qquad
 h_1=(1-l,1-u,1-l,1-u+(2-4\rho)D).
\]
Their means are $p,1-p$ and they reproduce $y_*$. The two
nonendpoint entries are strictly inside their envelopes because
$1/4<\rho<1/2$. In the exact local-rectangle and cross-fiber
half-plane representation, the only tight rows at $V$ are the
target lower $m_0$, the source upper $m_1$, and $A_0\leq C_1$.
To see uniqueness of the tight comparison, the only tight lower
fiber piece is $A_0=q_{11}$ (target low $00$ and high $11$ are
zero), and the only tight upper piece is $C_1=q_{11}$ (source
low $10$ and high $01$ are zero). Every other relevant group cell
in each expression has positive mass. The common Fr\'echet pieces
are strict since all four type masses are $1/4$. The local group
marginals have only the two zero complements corresponding to the
stated coordinate constraints.

For completeness, the following table gives all sixteen local
rectangle slacks at $W$, multiplied by their positive denominators.
The order in each block is lower $m_0$ twice, upper $m_0$ twice,
lower $m_1$ twice, upper $m_1$ twice; the same order is used by
\texttt{exact\_rows} in the certificate program.
\[
\begin{array}{c|rrrr}
 &\multicolumn{4}{c}{\text{treatment-zero constraints}}\\
 e=0 &(1-l)k&D(1/2-k)&D/2-(1-u)k&0\\
 e=1 &3D/4-\sigma k&D/4+lk&(1-2l)k-D/4&D/4-uk\\ \hline
 &\multicolumn{4}{c}{\text{treatment-one constraints}}\\
 e=0 &3D/4-(1+D)k&D/4-lk&k-D/4&D/4+uk\\
 e=1 &0&D/2-(1-u)k&D(1/2-k)&(1-l)k
\end{array}
\]
The sixteen cross-fiber slacks, all multiplied by $D$, are
\[
\begin{array}{c|rrrr}
 (e,f)& C_f-A_e&H_f-A_e&C_f-B_e&H_f-B_e\\ \hline
 (0,0)&(1+u)k&D/2+(1-u)k&D/2-(1-u)k&D-(1+u)k\\
 (0,1)&(1+\sigma)k&D/2+(l+u)k&D/2-(l+u)k&D-(1+\sigma)k\\
 (1,0)&D/2+(l+u)k&D(1-k)&Dk&D/2-(l+u)k\\
 (1,1)&D/2+(1-u)k&D-(1-u-2l)k&(1-u-2l)k&D/2-(1-u)k
\end{array}
\]
Every entry other than the two displayed local zeros is strictly
positive on \eqref{eq:mirror-design-class}. Here is a sufficient
set of bounds making this a direct algebraic check:
\begin{align*}
 &\frac{D}{4(1-2l)}<k<\frac{D}{4u},\qquad
 k<\frac{3D}{4(1+D)},\qquad k<\frac{D}{2(1-u)},\\
 &k<\frac{D}{2(l+u)},\qquad
 k<\frac{D}{1+u},\qquad k<\frac{D}{1+\sigma},\qquad
 0<k<1/2,\qquad 1-u-2l>0.
\end{align*}
For example, the first lower bound follows from
$(1-l)/\{4(1-2l)\}<1/3<\rho$, using $l<1/5$.
The first upper bound follows from
$(1-l)/(4u)>4/9>\rho$, using $l<1/5,u<9/20$.
The remaining upper bounds follow by substituting
$k=D\rho/(1-l)$; their design thresholds exceed $4/9$ under the
same restrictions. Also $k>D/4$, $k<D/(4l)$,
$\sigma k<4D/9<3D/4$, and
$(1-u-2l)k<D$, covering the remaining table entries.
Thus $W$ is feasible and its only tight rows are the target upper
$m_0$ and source lower $m_1$. Exact fiber reconstruction supplies
an original-mass causal witness; no geometric outer relaxation is
being substituted for the full model.

\paragraph{Why the $\rho$ range is sufficient, not necessary.}
The interval in \eqref{eq:mirror-design-class} makes the displayed
slacks positive uniformly over the stated $l,u$ ranges. It does not
characterize every mirror design admitting sign separation.
For fixed $0<l<1/5$ and $l<u<9/20$, those strict slacks remain
positive also at $\rho=1/3$ and $4/9$. In particular the two potentially
limiting local slacks are
\[
 (1-2l)k-D/4=\frac{D(1-5l)}{12(1-l)}\quad(\rho=1/3),\qquad
 D/4-uk=\frac{D\{9(1-l)-16u\}}{36(1-l)}\quad(\rho=4/9),
\]
and both are positive; all other inequalities above retain their
strict margins. The base cells are positive and the active-row
perturbations below remain nonnegative. Continuity therefore gives
$\liminf\epsilon_0>0$ as $\rho$ alone approaches either endpoint.
This is not uniform in $l,u$: the displayed slacks can vanish when
those parameters simultaneously approach their outer boundaries.

\subsection{An explicit arithmetic radius}

Represent the full mean region by its thirty-two half-planes
\[
 n_j^\top m\leq h_j(y),\qquad j=1,\ldots,32,
\]
consisting of all eight local rectangle rows per environment and
the four comparison types for every ordered environment pair.
This representation is exact by the manuscript's local fiber lemma.
Both $h_j$ and all three witnesses in
\eqref{eq:mirror-three-witnesses} are affine in $(\epsilon,t)$.
For witness $v$, write its slack as
\[
 s_{vj}+c_{vj}\epsilon+e_{vj}t.
\]
All $s_{vj}$ are positive except the active rows just described.
For those zero rows, their perturbation slacks are respectively:
\[
\begin{array}{c|ccc}
 v^\tau&0&\epsilon-t&0\\
 v^H &(\epsilon-t)/\sigma&0&0
\end{array}
\]
for target lower $m_0$, source upper $m_1$, and $A_0\leq C_1$;
the last row is scaled if its normal is not normalized by $D$.
Both active rows for $v^-$ remain identically zero. Therefore the
zero-slack rows stay feasible whenever $|t|\leq\epsilon/2$.

Define $\epsilon_0$ as the minimum of $k/2$, all positive ratios
\[
 \frac{s_{vj}}{2(|c_{vj}|+|e_{vj}|/2)}
\]
with nonzero denominator, and the same ratios for the eight observed
cell probabilities. Their base probabilities are strictly positive.
Empty or zero-denominator ratios are omitted. This is a finite,
explicitly computable positive number. It ensures every previously
strict slack and every observed cell remains positive, while the
active slacks remain nonnegative. Thus all three witnesses are
feasible. It also ensures $-2k+\epsilon<0$, so the two sign
conclusions in the theorem follow. No LP or asymptotic limiting
argument is required to compute the radius.

\subsection{Paired observation neighborhoods and open design neighborhoods}

Fix a proved $\epsilon>0$ and let $t_\pm=\pm\epsilon/2$.
At each corresponding observation collection the active mean
witnesses can instead be defined for arbitrary $y$ by
\[
 v^\tau(y)=(L(y),d(y)+\sigma L(y)),\qquad
 v^H(y)=((R(y)+k-d(y))/\sigma,R(y)+k),\qquad
 v^-(y)=(L(y)+k,R(y)).
\]
These maps preserve their two active rows exactly. All other rows
are strict, as proved by the radius construction, so feasibility
persists on some full four-dimensional observation box. Consequently
throughout these boxes,
\begin{align*}
 (L_0,U_0,L_1,U_1)&=(L(y),L(y)+k,R(y),R(y)+k),\\
 U_\tau(y)&=u-p-\frac{1+u}{1-l}a_0+b_0+a_1+b_1,\\
 L_\tau(y)&=R(y)-L(y)-k.
\end{align*}
Adding the same perturbation $h$ to the two observations keeps
their four mean endpoints identical. The two upper endpoints differ
by $\epsilon$, and their signs remain opposite when
\[
 \|h\|_\infty<\frac{\epsilon}
 {4\{(1+u)/(1-l)+3\}},
\]
as well as satisfying the feasible-box radius. The lower endpoint
remains negative on a sufficiently small common box.

On each fixed-$\epsilon$ family, every deterministic decoder of
the design and four endpoints has worst-case upper-endpoint error
at least $\epsilon/2$, attained on that family by returning zero.
On its certified observation chamber, retaining $w=b_0+a_1$
additionally recovers the upper endpoint through
\[
 U_\tau=u-p-(1+u)L_0+(1-l)L_1+w.
\]
This design-parametric repair recovers the upper endpoint on the
certified observation chamber.

Finally consider a full design vector
$\theta=(\ell_0,u_0,\pi_0,\ell_1,u_1,\pi_1)$ near a proved mirror
design. The active coordinate rows remain
\[
 L_0=a_0/(1-\ell_0),\quad
 U_0=1-(1-\pi_0-a_0)/(1-\ell_0),\quad
 L_1=b_1/u_1,\quad U_1=1-(\pi_1-b_1)/u_1.
\]
The retained cross row remains $-A m_0+B m_1\leq d_\theta(y)$
with $0<A<B$, by continuity from $A/B=\sigma\in(0,1)$.
Its right-hand side has nonzero $b_0$ derivative $1/(u_0-\ell_0)$.
The three two-row active intersections depend continuously on
$(\theta,y)$ and every other row is strict. Hold $a_0,a_1,b_1$
at their constructed values, and solve the affine equation
\[
 d_\theta(y)/B+(A/B-1)L_0(y)=0
\]
for $b_0=b_0^*(\theta)$. This solution is continuous and equals
the nominal center value. Varying $b_0$ slightly above and below
$b_0^*(\theta)$ leaves all four mean endpoints fixed and gives
opposite signs of $U_\tau$, while the attaining intersections
remain feasible. Positivity and a negative lower endpoint persist
as well. This proves the full-dimensional open design extension.
The extension is local to the proved mirror-design class.

\paragraph{Verification.}
The independently assembled program
\path{research19/code/design_decodability.py} with option \texttt{--verify-mirror}
checks the thirty-two exact half-planes, computes the stated radius,
reconstructs original-mass witnesses, and reconstructs the entire
mean polygon with Fraction arithmetic. Twenty-three rational designs,
including the four corners of the symmetric-JSM parameter rectangle,
give forty-six exact full-polygon checks and 138 causal-witness
checks. These checks validate formulas and implementation; the
continuum theorem is established by the positive-slack algebra above.
The record is
\path{research19/results/mirror_design_certificates.json}.

The design-class result covers the mirror class and its local open design neighborhoods at zero drift. Proposition~\ref{thm:insufficiency} supplies an exact positive-drift family using separate observed tables.

\section{Identical marginal summaries can conceal different causal conclusions}
\label{app:decision-information}

The common subtraction interval is $[-109/480-\delta,11/480+\delta]$. Identical inputs at $t=\pm1$ force every four-endpoint decoder to incur worst-case upper error at least $1/160$. The formula is proved below by full-mass dual identities and attaining causal models. This family also gives the exact negativity frontier $\delta<-t/160$, a model-specific breakdown frontier \cite{breakdown}.

This appendix proves Proposition~\ref{thm:insufficiency}, including positive drift and an explicit neighborhood of paired observations. The target is environment $0$. Selection bounds hold within each environment's own response-type law: $\ell_eq_{e,ij}\le z_{e,ij}\le u_eq_{e,ij}$. Fix
\[
 (\pi_0,\pi_1)=(1/4,3/4),\qquad
 (\ell_0,u_0)=(1/7,2/5),\quad(\ell_1,u_1)=(3/5,6/7),
\]
and $\TV(q_1,q_0)\le\delta$. These are the pair-selection JSM envelopes for $\Gamma_0=\Gamma_1=2$. No selection vector or response-type law is fixed beyond these restrictions.

For $t\in[-1,1]$, the observations are
\[
 p_0(t)=\frac1{160}\begin{pmatrix}60&60\\25-t&15+t\end{pmatrix},\qquad
 p_1=\frac1{160}\begin{pmatrix}27&13\\74&46\end{pmatrix}.
\]
Rows are treatment and columns outcome. All cells are positive, with minimum $13/160$, and the propensities remain fixed. Write $y=(a_0,b_0,a_1,b_1)$, where $a_e=p_e(0,1)$ and $b_e=p_e(1,1)$. At the family center,
\[
 y_*=(3/8,3/32,13/160,23/80),\qquad
 y(t)=y_*+(0,t/160,0,0).
\]
Define $L=7a_0/6$, $R=7b_1/6$, and
\[
 V(y)=3/20-49a_0/30+b_0+a_1+b_1.
\]
Then $V(y(t))=t/160$.

\subsection{Five endpoint inequalities in the original mass variables}

Let $\mathcal L_{e,ij}=z_{e,ij}-\ell_eq_{e,ij}$ and $\mathcal U_{e,ij}=u_eq_{e,ij}-z_{e,ij}$ denote nonnegative selection slacks. Let $m_a=\Pr_0(Y(a)=1)$ and $n_a=\Pr_1(Y(a)=1)$. Observation consistency and normalization give
\begin{align*}
 m_0-L&=\tfrac76(\mathcal L_{0,10}+\mathcal L_{0,11}),\\
 \tfrac18+L-m_0&=\tfrac76(\mathcal L_{0,00}+\mathcal L_{0,01}),\\
 n_1-R&=\tfrac76(\mathcal U_{1,01}+\mathcal U_{1,11}),\\
 \tfrac18+R-n_1&=\tfrac76(\mathcal U_{1,00}+\mathcal U_{1,10}).
\end{align*}
Since $|n_1-m_1|\le\delta$, these identities imply the four bounds
\[
 L\le m_0\le\tfrac18+L,\qquad R-\delta\le m_1\le\tfrac18+R+\delta.
\]
For the joint endpoint, put $\varphi=(1,2/5,6/7,0)$ in response-type order $00,01,10,11$. Its component range is one. Direct expansion in the original sixteen mass variables gives
\begin{align*}
 V(y)+\delta-(m_1-m_0)
 ={}&\tfrac{19}{30}(\mathcal L_{0,10}+\mathcal L_{0,11})
       +\mathcal U_{0,00}+\mathcal L_{0,11}\\
    &+\mathcal L_{1,01}+\mathcal U_{1,10}
       +\delta-\varphi^\top(q_1-q_0).
\end{align*}
The last term is nonnegative because $\sum(q_1-q_0)=0$ and
$\varphi^\top(q_1-q_0)\le\TV(q_1,q_0)\le\delta$. Thus $m_1-m_0\le V(y)+\delta$. This dual identity certifies the full model's upper bound for arbitrary observations.

\subsection{Explicit attaining models on the complete parameter rectangle}

For any fiber triple $(r_0,r_1,c)$ define
\[
 q(r_0,r_1,c)=(1-r_0-r_1+c,r_1-c,r_0-c,c).
\]
The following table gives a target triple $x_0$ and a source triple $x_1$ attaining each endpoint. Set
\[
 C=\frac{35}{9}\{a_0-b_0-3/20-3L/5+2(L+V(y)+\delta)/5\}.
\]
\begin{center}
\begin{tabular}{lcc}
\toprule
Endpoint & Target triple $x_0$ & Source triple $x_1$\\
\midrule
$L_0$ & $(L,3/8,3/20)$ & $(L,3/8,3/20)$\\
$U_0$ & $(1/8+L,3/8,7/30)$ & $(1/8+L,3/8,7/30)$\\
$L_1$ & $(11/24,R-\delta,1/6-\delta)$ & $(11/24-\delta,R,1/6-\delta)$\\
$U_1$ & $(25/48+\delta,1/8+R+\delta,1/16-2\delta)$ & $(25/48+2\delta,1/8+R,1/16-2\delta)$\\
$U_\tau$ & $(L,L+V(y)+\delta,C)$ & $(L-\delta,L+V(y),C-\delta)$\\
\bottomrule
\end{tabular}
\end{center}
The two triples in the first two rows give the same type law. In the last three rows, respectively,
\[
 q_1-q_0=(0,\delta,-\delta,0),\quad
 (0,-\delta,\delta,0),\quad(\delta,0,0,-\delta).
\]
Hence the drift is exactly $\delta$ in those rows.

Selection coordinates can be reconstructed explicitly. For environment $e$ with triple $(r_0,r_1,c)$, let $k_e=z_{e,11}$. The observation equations give
\[
 z_e=(\pi_e-b_e-r_0+a_e+k_e,\ b_e-k_e,\ r_0-a_e-k_e,\ k_e).
\]
Feasibility of $q_e$ and a selection vector is equivalent to type
nonnegativity and $c$ belonging to that environment's exact fiber
from Appendix~\ref{sec:aligned-envelope}, using its own envelope.
The same endpoint-score group construction supplies $k_e$ and $z_e$;
no common law across these two local reconstructions is required.
For every table row, direct rational substitution at the four parameter vertices
\[
 (t,\delta)\in\{-1,1\}\times\{0,1/100\}
\]
gives nonnegative $q_e$ and all local lower-fiber pieces no greater than $c$, and all upper pieces no smaller than $c$. These comparisons and the type masses are affine in $(t,\delta)$, so they remain valid throughout the rectangle. The reconstructed selection vector satisfies all sixteen envelope inequalities and every observed cell. Each objective attains its dual bound, proving
\[
 (L_0,U_0,L_1,U_1)=(7/16,9/16,161/480-\delta,221/480+\delta),
 \qquad U_\tau=t/160+\delta
\]
for all $t\in[-1,1]$ and $\delta\in[0,1/100]$. These causal witnesses establish compatibility throughout the parameter rectangle.

\subsection{An explicit neighborhood with indistinguishable summaries}

The endpoint expressions and witnesses above are affine in the full four-dimensional observation vector. The reduced system has forty-eight inequalities: sixteen local fiber inequalities in each environment, including type nonnegativity, and sixteen signed TV inequalities. For a witness row, write a reduced slack as $s_i(t,\delta)+g_i^\top h$, where the same perturbation $h=y-y(t)$ is added to the observations. Whenever a base slack is zero its observation gradient is zero. Exact substitution at $t=\pm1$ and $\delta\in\{0,1/320\}$ gives the following minimum ratios over rows with nonzero gradients:
\begin{center}
\begin{tabular}{lc}
\toprule
Witness & $\min_i s_i/\|g_i\|_1$\\
\midrule
$L_0$ & $19/1820$\\
$U_0$ & $3/3920$\\
$L_1$ & $19/5460$\\
$U_1$ & $17/10920$\\
$U_\tau$ & $1/158$\\
\bottomrule
\end{tabular}
\end{center}
Their common minimum $3/3920$ exceeds $1/2000$. Affineness in $(t,\delta)$ therefore gives nonnegative slacks whenever
\[
 -1\le t\le1,\qquad0\le\delta\le1/320,\qquad\|h\|_\infty\le1/2000.
\]
Observed cells remain positive. Thus all five endpoint formulas hold on the entire perturbed parameter domain. The independent verifier additionally checks all 320 observation/parameter corners directly in the original mass model; convexity establishes their whole domain.

For a common perturbation $h$, compare $y_-=y(-1)+h$ with $y_+=y(1)+h$ at the same drift budget. Their sharp mean endpoints coincide because those expressions depend only on $a_0,b_1,\delta$. Write
\[
 v(h)=-49h_{a_0}/30+h_{b_0}+h_{a_1}+h_{b_1},\qquad
 |v(h)|\le139/60000.
\]
Their sharp joint uppers obey
\[
 U_\tau(y_-,\delta)=-1/160+\delta+v(h)\le-97/120000<0,
\]
\[
 U_\tau(y_+,\delta)=1/160+\delta+v(h)\ge59/15000>0.
\]
The $U_0$ witness has negative ATE $3/8-(1/8+7a_0/6)$ throughout this domain, so the positive branch admits both effect signs. The common marginal subtraction interval also contains zero: its lower and upper are
\[
 -109/480-\delta+\tfrac76(h_{b_1}-h_{a_0}),\qquad
 11/480+\delta+\tfrac76(h_{b_1}-h_{a_0}).
\]
Both signs are strict. The interiors of the two paired boxes are relatively open at the fixed propensities. Both strict sign conclusions therefore persist over paired observation neighborhoods and a nonzero interval of positive drift budgets.

\subsection{Decision impossibility, a sharp summary error, and a local repair}

Let $S(p,\delta)=(L_0,U_0,L_1,U_1)$ denote the four full-model sharp mean endpoints. At any fixed $\delta<1/160$, the displayed $t=\pm1$ laws have identical $S$, but only the negative branch has $U_\tau<0$. Consequently no function of the design, drift budget, and $S$ can always decide whether the joint model identifies a negative ATE. A procedure that declares negativity only when it is justified for every compatible law giving that summary must withhold that declaration even at the negative branch. This establishes a population information restriction on the four-endpoint summary.

There is also a quantitative endpoint consequence. For any deterministic summary decoder $g$ and any fixed $\delta\in[0,1/100]$,
\[
 \sup_{-1\le t\le1}|g(S(p(t),\delta))-(t/160+\delta)|\ge1/160.
\]
The two extreme upper endpoints are $1/80$ apart and the decoder returns one value. On this specified family the bound is sharp: returning $\delta$ attains worst-case error $1/160$. The attained worst-case error here is specific to the displayed family.

On the certified observation domain, the discarded dependence restriction yields
\begin{equation}\label{eq:summary-local}
 U_\tau=3/20-49a_0/30+b_0+a_1+b_1+\delta.
\end{equation}
For the displayed $t$ family the exact negativity condition is
\begin{equation}\label{eq:sign}
 \delta<-t/160.
\end{equation}
The local formulas also identify a sufficient repair. Retain the single observable scalar $w=b_0+a_1$ in addition to the four endpoints. On the certified domain,
\[
 U_\tau=\frac3{70}-\frac75L_0+\frac67U_1+w+\frac\delta7.
\]
Thus the lost scalar recovers the exact upper endpoint and its sign decision in this regime. The formula gives a local repair on this certified chamber. For general designs, the fiber representation supplies the constraints needed for joint optimization.

The exact producer and certificate are \path{research17/code/decision_information.py} and \path{research17/results/decision_information_certificate.json}. The separate verifier \path{research17/code/independent_decision_verifier.py} reconstructs the primitive mass restrictions, the original-mass dual identities, and all witness checks with standard-library Fraction arithmetic, without importing the producer.

\section{Outcome-specific restrictions and mismatch loss}
\label{app:uniform-mismatch}
The first comparison below completes Theorem~\ref{thm:universal}: restrictions conditional on one potential outcome preserve the rectangle. The remaining subsections quantify the missing corner when restrictions instead act on the response pair. All regions and widths are from the full common-law model.

\subsection{Why outcome-specific restrictions preserve rectangularity}
Replace the pair-type limits by mass inequalities
\[
 \ell_e\Pr(Y(a)=y)\le\Pr_e(A=1,Y(a)=y)\le u_e\Pr(Y(a)=y)
 \quad(a,y\in\{0,1\}).
\]
For any unrestricted common-law witness, its selected marginal masses are
\[
 \Pr_e(A=1,Y(0)=1)=m_0-a_e,\quad
 \Pr_e(A=1,Y(0)=0)=\pi_e-m_0+a_e,
\]
\[
 \Pr_e(A=1,Y(1)=1)=b_e,\quad
 \Pr_e(A=1,Y(1)=0)=\pi_e-b_e.
\]
The weaker restrictions are therefore equivalent to
\begin{align*}
 \ell_em_0&\le m_0-a_e\le u_em_0,&
 \ell_e(1-m_0)&\le\pi_e-m_0+a_e\le u_e(1-m_0),\\
 \ell_em_1&\le b_e\le u_em_1,&
 \ell_e(1-m_1)&\le\pi_e-b_e\le u_e(1-m_1).
\end{align*}
They describe the same local coordinate rectangle $R_e$ as a single pair-envelope model, but their truth is independent of the joint dependence coordinate and of the selected coupling.

For completeness, unrestricted selection $0\le z_e\le q$ has local mean rectangle
$[a_e,a_e+\pi_e]\times[b_e,b_e+1-\pi_e]$. Splitting $q$ into treated and untreated measures gives its exact fiber
\[
 J_e^{\rm IV}(m)=
 [\max\{0,m_0+m_1-1,m_1+r_e-1,m_0-r_e\},
 \min\{m_0,m_1,s_e,m_0+m_1-s_e\}].
\]
Cross comparisons of the two noncommon lower and upper pieces reduce to
\[
 m_1\le1+s_f-r_e,\quad m_0\ge r_e+s_f-1,\quad
 m_0\le r_e+s_f,\quad m_1\ge s_f-r_e.
\]
Thus the unrestricted region $P_{\rm IV}$ is a rectangle, with corners attainable by common fiber intersection, recovering the binary IV property \cite{rr}. For every $m$ in it, any unrestricted witness obeys the weaker limits if and only if $m\in\bigcap_eR_e$. Hence
\[
 P_{\rm weak}=P_{\rm IV}\cap\bigcap_eR_e,
\]
which is rectangular even when the envelopes differ. This also extends to coordinate- and outcome-specific marginal limits: every extra inequality still restricts only one mean. The claim requires zero drift and no additional joint-law or cross-selection restrictions.

\subsection{An exact nested-model comparison with binary IV}
The unrestricted mass model is exactly realizable as a randomized-IV
model with exclusion. Draw $T\sim q$, independent uniforms $V_e$, and an
independent environment label $E$. Set
$A(e)=\mathbf1\{V_e\le z_e(T)/q(T)\}$ on positive-mass types,
$A=A(E)$ and $Y=T_A$. Then $E$ is independent of the entire potential
assignment and outcome vector. Conversely, consistency and randomized
IV assignment produce unrestricted masses $0\le z_e\le q$.
Finite type-specific score bounds therefore define a nested submodel,
not a model outside the IV framework. No monotonicity is imposed.

Use Proposition~\ref{thm:insufficiency}'s positive tables at $t=0$:
\[
p_0=\frac1{160}\begin{pmatrix}60&60\\25&15\end{pmatrix},\qquad
p_1=\frac1{160}\begin{pmatrix}27&13\\74&46\end{pmatrix}.
\]
Substitution into the unrestricted rectangles and cross comparisons
above gives
\[
P_{\rm IV}=[3/8,5/8]\times[23/80,43/80].
\]
For $\Gamma=2$, the finite envelopes are $(1/7,2/5)$ and $(3/5,6/7)$.
Substitution into the finite fiber comparisons gives exactly
\[
\begin{split}
P_{\rm pair}=\{m:\;&7/16\le m_0\le9/16,\quad
161/480\le m_1\le221/480,\\
&-256m_0+560m_1\le133,\quad70m_0+32m_1\le53\}.
\end{split}
\]
The six vertices, in boundary order, are
\[
(7/16,161/480),\ (9/16,161/480),\ (9/16,109/256),\quad
(41/75,221/480),\ (749/1536,221/480),\ (7/16,7/16).
\]
The first oblique inequality and $m_0\ge7/16$ give
$m_1-m_0\le0$, attained at the last vertex. The sharp marginal upper
difference is instead $221/480-7/16=11/480$.
At that missing corner, the local dependence fibers are
\[
J_0=[889/4320,1519/4320],\qquad J_1=[7/96,7/60],
\]
which are disjoint. The weaker outcome-specific model retains the whole
box $[7/16,9/16]\times[161/480,221/480]$. This isolates the effect of
pair-type restrictions using the same observed law and treatment
categories. This same-table comparison locates the oblique constraints introduced
by pair-type selection limits within the IV model \cite{song}. The standalone rational
elimination and original-mass reconstruction are in
\path{research22/literature/verify_nested_iv_example.py}.

\subsection{Mismatch size and the uniform error bound}
At zero drift, information can be lost even when the mismatch is small, but its size can be controlled. Define a dimensionless envelope mismatch
\begin{equation}\label{eq:mismatch-size}
 \chi=\max_{e,f}\max\left\{
 \frac{|\alpha_e-\alpha_f|}{\beta_e+\beta_f-1},\
 \frac{|\beta_e-\beta_f|}{\alpha_e+\alpha_f+1}\right\}.
\end{equation}
The denominators are positive for strict envelopes, and $\chi=0$ exactly at alignment. Let $W_a=U_a-L_a$ be the \emph{full-model} sharp marginal widths.

\begin{theorem}[Uniform mismatch error]\label{thm:gap-bound}
For every compatible binary common-law model with strict envelopes,
\begin{equation}\label{eq:uniform-gap}
 0\leq G(p)\leq\frac{\chi}{1+\chi}(W_0+W_1)
 \leq\frac{2\chi}{1+\chi}.
\end{equation}
The same bound holds for the lower composition gap. For every $0<\chi\le1/29$, strictly positive compatible three-environment tables attain the first upper bound. Its coefficient cannot be reduced at those mismatch values, for either endpoint.
\end{theorem}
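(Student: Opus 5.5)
The plan is to reduce the bound to a statement about the slopes of the facets of the convex polygon $P(p)$, and then to handle attainment by an explicit construction. By the fiber representation used in the proof of Theorem~\ref{thm:universal} (Appendix~\ref{sec:universal-necessity}), $P(p)$ is cut out by two kinds of half-planes.
\begin{itemize}
\item The local-rectangle rows, which are axis-parallel.
\item The comparisons of a lower fiber piece ($0$, $m_0+m_1-1$, $A_e$, $B_e$) with an upper piece ($m_0$, $m_1$, $C_f$, $H_f$).
\end{itemize}
Comparisons against the common pieces already hold inside the local rectangles, so only $A_e,B_e$ against $C_f,H_f$ matter. With $\alpha_e=(1-u_e)/D_e$ and $\beta_e=u_e/D_e$, the coefficients are $(1-\ell_e)/D_e=\alpha_e+1$ and $\ell_e/D_e=\beta_e-1$. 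After multiplying through, the four comparisons read
\begin{align*}
A_e\le C_f:&\quad (\alpha_f-\alpha_e)m_0+(\beta_e+\beta_f-1)m_1\le\cdot,\\
A_e\le H_f:&\quad -(\alpha_e+\alpha_f+1)m_0+(\beta_e-\beta_f)m_1\le\cdot,\\
B_e\le C_f:&\quad (\alpha_e+\alpha_f+1)m_0-(\beta_e-\beta_f)m_1\le\cdot,\\
B_e\le H_f:&\quad -(\alpha_f-\alpha_e)m_0-(\beta_e+\beta_f-1)m_1\le\cdot.
\end{align*}
Comparing with \eqref{eq:mismatch-size}, every facet normal $(n_0,n_1)$ of $P(p)$ satisfies $\min(|n_0|,|n_1|)\le\chi\max(|n_0|,|n_1|)$. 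Equivalently, every edge of $P(p)$ is either \emph{near-vertical}, with $|dm_0|\le\chi|dm_1|$, or \emph{near-horizontal}, with $|dm_1|\le\chi|dm_0|$. Since $P(p)$ is compact and nonempty whenever $p$ is compatible, $G(p)\ge0$ is immediate.

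For the upper bound, let $x=(L_0,x_1)$ be a point of $P$ attaining $L_0$ and $y=(y_0,U_1)$ a point attaining $U_1$. Walk along the upper-left boundary chain of $P$ from $x$ to $y$. On this chain both coordinates are nondecreasing, and by convexity the edge directions rotate monotonically from vertical toward horizontal. Hence there is a vertex $v$ such that every edge from $x$ to $v$ is near-vertical and every edge from $v$ to $y$ is near-horizontal. Summing over edges gives
\[
v_0-L_0\le\chi(v_1-x_1),\qquad U_1-v_1\le\chi(y_0-v_0).
\]
Put $g=(v_0-L_0)+(U_1-v_1)$, which is the subtraction excess at $v$ and satisfies $g\ge G(p)$. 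Adding the two inequalities,
\[
g\le\chi\{(v_1-x_1)+(y_0-v_0)\}=\chi\{(U_1-x_1)+(y_0-L_0)-g\}\le\chi(W_1+W_0-g).
\]
Rearranging gives $G(p)\le g\le\frac{\chi}{1+\chi}(W_0+W_1)$. The final inequality in \eqref{eq:uniform-gap} follows from $W_a\le1$. The lower gap is handled the same way, using the lower-right chain between the points attaining $U_0$ and $L_1$.

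For attainment, the inequalities above show what a tight example must look like.
\begin{itemize}
\item The chain must consist of exactly two oblique edges, of slopes exactly $\chi$ and $1/\chi$.
\item The points $x$ and $y$ must sit at the far corners, so that $U_1-x_1=W_1$ and $y_0-L_0=W_0$.
\item The maximizer of $m_1-m_0$ must be the vertex $v$.
\end{itemize}
With two environments a single mismatched pair generally realizes only one extremal ratio, so a third environment is needed. I would pick three envelopes so that one pair realizes $|\alpha_e-\alpha_f|/(\beta_e+\beta_f-1)=\chi$ through an $A_e\le C_f$ row, and another pair realizes $|\beta_e-\beta_f|/(\alpha_e+\alpha_f+1)=\chi$ through an $A_e\le H_f$ row, with no pair exceeding $\chi$. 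I would then build positive tables as in Appendix~\ref{sec:universal-necessity}: choose a positive common law with endpoint scores, placing zero group cells so that exactly these two cross rows and the required coordinate rows are tight at the intended vertices. Finally I would verify feasible witnesses at $x$, $v$ and $y$, and confirm that all other rows are slack.

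I expect this last verification to be the main obstacle. The slacks are rational functions of $\chi$, and keeping them all positive is presumably what forces the restriction $\chi\le1/29$. I would first try a one-parameter family symmetric under the mirror map $(\ell,u,\pi)\mapsto(1-u,1-\ell,1-\pi)$ of Appendix~\ref{sec:mirror-design-information}, which should halve the number of inequalities to check. The lower endpoint would then follow from the mirrored tables.
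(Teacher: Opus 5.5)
Your proof of the inequality chain is correct and is essentially the paper's argument. The paper passes to deficit coordinates $x=m_0-L_0$, $y=U_1-m_1$, orders the steep edges of the convex lower boundary before the shallow ones, and sums the descents on either side of the junction. That is exactly your upper-left-chain argument written in other coordinates.

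You bound $G(p)$ by the excess $g$ at the junction vertex, where the paper uses the identity $G=x_*+y_*$. This lets you skip the paper's separate treatment of $\chi\ge1$, segments and singletons. One exception remains: for a positively sloped segment you should say why its direction is orthogonal to some defining normal. The reason is that a defining row must be tight at a relative interior point, since otherwise $P$ would contain a disc; "facet normal" does not cover this case automatically.

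The genuine gap is the attainment half, which the last two sentences of the statement require. You give only a plan, and the plan starts from a false characterization of equality. Equality does not force two oblique edges of slopes $\chi$ and $1/\chi$: your own first chain inequality $v_0-L_0\le\chi(v_1-x_1)$ holds with equality trivially when $v=x$. The paper's extremal region is of exactly that kind. It is the triangle $\operatorname{conv}\{(L,Y),(X,Y),(X,Y+\chi(X-L))\}$, whose single oblique edge of slope exactly $\chi$ runs from $(L_0,L_1)$ to $(U_0,U_1)$. Hence $W_1=\chi W_0$ and $G=\chi W_0=\frac{\chi}{1+\chi}(W_0+W_1)$.

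More importantly, placing zero group cells to make chosen rows tight at chosen points, as in Appendix B, only gives local information about $P$. Equality needs the global facts $x_1=L_1$ and $y_0=U_0$, which require an outer description of all of $P$ by valid rows, matched by inner witnesses. The paper supplies this as an explicit sandwich, using three environments with propensities $1/2$, the envelopes below, and explicit positive tables:
\begin{itemize}
\item[] $(1/3,2/3)$, $(1/(3-\zeta),2/(3-\zeta))$ and $((1-\zeta)/(3-\zeta),(2-\zeta)/(3-\zeta))$.
\end{itemize}
Four valid fiber rows cut out the triangle: the environment-two lower $m_0$ row, the environment-zero lower $m_1$ row, $B_0\le C_1$, and $A_2\le C_1$. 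The row $B_0\le C_1$ is a pure $m_0$ bound because $\beta_0=\beta_1=2$, and $A_2\le C_1$ is the slope-$\chi$ edge. Original-mass witnesses at the three vertices then show the triangle is attained. All their type masses and score slacks factor into polynomials that are nonnegative on $\zeta\in[0,1/10]$.

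Here $\chi=\zeta/(3-\zeta)$, so $\zeta\le1/10$ is exactly $\chi\le1/29$. Also $W_0=3(1-\zeta)/40$ stays away from zero, which is what makes the equality nontrivial and shows the coefficient cannot be reduced. Without some such explicit family, the attainment and unimprovability claims remain unproved.

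For the lower endpoint, your mirror map does work once a tight upper example exists. Relabelling treatment swaps $m_0$ and $m_1$ and sends the upper gap to the lower gap. It also sends $(\alpha_e,\beta_e)$ to $(\beta_e-1,\alpha_e+1)$, which leaves $\chi$ unchanged, and it swaps $W_0$ and $W_1$. The paper instead uses outcome complementation, which keeps the design itself fixed.
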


\subsection{All boundary directions of the mean region}
Let $\alpha_e=(1-u_e)/(u_e-\ell_e)$ and $\beta_e=u_e/(u_e-\ell_e)$. Then $\alpha_e>0$ and $\beta_e>1$. In the notation of \eqref{eq:local-fiber}, put
\[
 k_e^A=(r_e-u_e)/D_e,\quad k_e^B=(-r_e+\ell_e)/D_e,
 \quad k_e^C=s_e/D_e,\quad k_e^H=-s_e/D_e.
\]
The exact fiber pieces are
\begin{align*}
 A_e&=k_e^A-\alpha_em_0+\beta_em_1,&
 B_e&=k_e^B+(\alpha_e+1)m_0-(\beta_e-1)m_1,\\
 C_e&=k_e^C-\alpha_em_0-(\beta_e-1)m_1,&
 H_e&=k_e^H+(\alpha_e+1)m_0+\beta_em_1.
\end{align*}
As proved in Appendix~\ref{sec:aligned-envelope}, on the intersection of the local mean rectangles the common fiber is nonempty exactly when all ordered pairs satisfy the four comparisons below. The derivation of the local fiber uses each environment's own envelope and does not require alignment.
\[
\begin{array}{c|c|c}
\text{Comparison}&\text{Left-hand side}&\text{Right-hand side}\\ \hline
A_e\le C_f&-(\alpha_e-\alpha_f)m_0+(\beta_e+\beta_f-1)m_1&k_f^C-k_e^A\\
A_e\le H_f&-(\alpha_e+\alpha_f+1)m_0+(\beta_e-\beta_f)m_1&k_f^H-k_e^A\\
B_e\le C_f&(\alpha_e+\alpha_f+1)m_0-(\beta_e-\beta_f)m_1&k_f^C-k_e^B\\
B_e\le H_f&(\alpha_e-\alpha_f)m_0-(\beta_e+\beta_f-1)m_1&k_f^H-k_e^B
\end{array}
\]
Consequently $P$ is a compact convex polygon, possibly a segment or singleton. Its nonaxis boundary normals belong to two families:
\[
 |n_0|/|n_1|\le\chi\quad\hbox{or}\quad |n_1|/|n_0|\le\chi,
\]
with the respective denominator nonzero. All defining directions are covered; this is not a restriction to the single active inequality in the converse construction.

\subsection{Proof of the uniform bound}
Transform to deficit coordinates $x=m_0-L_0$, $y=U_1-m_1$ and let $Q$ be the transformed region. Then
\[
 Q\subseteq[0,W_0]\times[0,W_1],\qquad \min_Qx=\min_Qy=0,
 \qquad G=\min_Q(x+y).
\]
Using a point with $x=0$ and a point with $y=0$ gives
\begin{equation}\label{eq:elementary-width}
 G\le\min(W_0,W_1)\le(W_0+W_1)/2.
\end{equation}
Thus the desired inequality is immediate when $\chi\ge1$ or one width is zero. If $\chi=0$, all $\alpha_e$ and $\beta_e$ agree; all cross comparisons are coordinate restrictions and $P$ is a rectangle. Hence $G=0$.

Suppose $0<\chi<1$ and first that $Q$ has two-dimensional interior. Its lower boundary
\[
 f(x)=\min\{y:(x,y)\in Q\},\qquad 0\le x\le W_0,
\]
is continuous, convex, and piecewise affine. Put $h=f(0)\le W_1$ and $t=\min\{x:f(x)=0\}\le W_0$. If $h=0$ there is no gap. Otherwise $t>0$ and $f$ decreases on $[0,t]$. Every edge comes from a defining half-plane. Writing a decreasing edge slope as $-s$, the normal restrictions give either $s\le\chi$ or $s\ge1/\chi$. Convexity orders the steep edges before the shallow edges. Horizontal edges have $s=0$; vertical edges do not form positive-length graph segments.

Let $(x_*,y_*)$ be the junction of the steep and shallow portions, allowing either portion to be empty. The slope of $x+f(x)$ is negative on the steep portion and positive on the shallow portion. Points beyond $t$ cannot improve on $(t,0)$. Therefore $G=x_*+y_*$. Summing vertical descent on the first portion and the second portion gives
\[
 x_*\le\chi(h-y_*)\le\chi(W_1-y_*),\qquad
 y_*\le\chi(t-x_*)\le\chi(W_0-x_*).
\]
Adding proves $(1+\chi)G\le\chi(W_0+W_1)$.

If $P$ is a singleton, $G=0$. For a line segment, an axis direction again has zero gap. A negative slope in mean coordinates attains $(L_0,U_1)$ at one endpoint. For a positive slope, the deficit segment connects $(0,W_1)$ and $(W_0,0)$, so $G=\min(W_0,W_1)$. At a relative interior point some nonzero defining inequality must be tight, or a two-dimensional feasible ball would exist. Its normal is orthogonal to the segment. This normal is a cross-comparison direction, so $W_1/W_0\le\chi$ or $W_1/W_0\ge1/\chi$. Either relation implies $(1+\chi)G\le\chi(W_0+W_1)$. These cases exhaust all dimensions.

Since $P\subseteq[0,1]^2$, $W_0+W_1\le2$. Interchanging the two deficit directions gives the same argument for the lower gap $\min_P(m_1-m_0)-(L_1-U_0)$: the absolute normal-ratio restrictions are unchanged.

\subsection{Optimality and the error tolerance}
Proposition~\ref{prop:sharp-coefficient} attains the width-weighted
coefficient for every $0<\chi\le1/29$ in the primitive-envelope class.
Its positive cells and nonvanishing $W_0$ show a first-order loss near
alignment. The coefficient describes worst-case loss in this primitive-envelope
class; the symmetric JSM imposes additional parametric restrictions.

If a user tolerates an absolute endpoint error $\varepsilon>0$, it suffices to check $\chi(W_0+W_1)/(1+\chi)\le\varepsilon$. A design-only sufficient rule is $2\chi/(1+\chi)\le\varepsilon$; for $0<\varepsilon<2$ this is $\chi\le\varepsilon/(2-\varepsilon)$. These statements apply to zero drift and to this pair-selection model.

\subsection{The uniform coefficient is attained near alignment}
\label{app:sharp-mismatch-coefficient}

The width-dependent coefficient in \eqref{eq:uniform-gap} is optimal in the primitive-envelope class. For each $0<\chi\le1/29$, a strictly positive three-environment observation collection attains equality. The construction uses asymmetric primitive absolute envelopes.

\begin{proposition}[Exact saturation of the mismatch bound]
\label{prop:sharp-coefficient}
For $0\le\zeta\le1/10$, take all three propensities equal to $1/2$ and envelopes
\[
 (\ell_0,u_0)=(1/3,2/3),\quad
 (\ell_1,u_1)=\left(\frac1{3-\zeta},\frac2{3-\zeta}\right),\quad
 (\ell_2,u_2)=\left(\frac{1-\zeta}{3-\zeta},\frac{2-\zeta}{3-\zeta}\right).
\]
Write each observation table as $p_e=\left(\begin{smallmatrix}1/2-a_e&a_e\\1/2-b_e&b_e\end{smallmatrix}\right)$, where
\[
\begin{array}{c|cc}
 e&a_e&b_e\\\hline
 0&(4-\zeta)/20&2/15\\
 1&9(1-\zeta)/\{20(3-\zeta)\}&(4+3\zeta)/\{20(3-\zeta)\}\\
 2&3(5+\zeta)/\{20(3-\zeta)\}&(8+15\zeta-3\zeta^2)/\{40(3-\zeta)\}
\end{array}
\]
All cells are positive. With $L=3(5+\zeta)/40$, $X=9/20$, $Y=1/5$, and $\chi=\zeta/(3-\zeta)$, the complete sharp mean region is
\[
 P=\operatorname{conv}\{(L,Y),(X,Y),(X,Y+\chi(X-L))\}.
\]
Consequently $W_0=X-L$, $W_1=\chi W_0$, and
\begin{equation}\label{eq:sharp-coefficient-equality}
 G=\chi W_0=\frac{\chi}{1+\chi}(W_0+W_1).
\end{equation}
At $\zeta=0$ the region is a horizontal segment and the gap vanishes. As $\zeta\downarrow0$, the observations stay strictly positive and the envelopes approach the common interior envelope $(1/3,2/3)$.
\end{proposition}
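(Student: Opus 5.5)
The plan has two halves: a dual certificate for the upper ATE endpoint, and explicit witnesses that realize the three vertices of the claimed triangle. It uses the exact half-plane description of $P(p)$ from the previous subsection. A mean pair is feasible iff it lies in all three local rectangles $R_e$ and every comparison $A_e\le C_f$, $A_e\le H_f$, $B_e\le C_f$, $B_e\le H_f$ holds. Substituting the envelopes and tables into the $\alpha_e,\beta_e,k_e$ pieces is routine rational algebra in $\zeta$.

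At propensity $1/2$ one has $D_0=1/3$ and $D_1=D_2=1/(3-\zeta)$. From this, $(\alpha_0,\beta_0)=(1,2)$, $(\alpha_1,\beta_1)=(2-\zeta,2)$ and $(\alpha_2,\beta_2)=(1,2-\zeta)$. Environment $1$ differs from the target only in $\alpha$, and environment $2$ only in $\beta$. The certificate step identifies the rows that carve out the triangle.
\begin{enumerate}
\item The local row $m_0\le X=9/20$.
\item The local row $m_1\ge Y=1/5$.
\item One oblique comparison, which I expect to be of the form $m_1-Y\le\chi(m_0-L)$. The mismatch $\zeta$ between environments $0$ and $1$ (or $2$) gives a normal ratio exactly $\chi=\zeta/(3-\zeta)$, matching definition \eqref{eq:mismatch-size}. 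I will find which ordered pair gives ratio exactly $\chi$, e.g.\ $A_1\le C_0$ with coefficients $(\alpha_1-\alpha_0,\beta_0+\beta_1-1)=(1-\zeta,3)$. That pair does not give ratio $\chi$, so the correct pair must be located by trying the pairs involving environment $2$ as well.
\end{enumerate}
Every remaining row of the finite list must be checked to be implied by these three on the triangle. Checking it at the three vertices suffices, since the rows are affine.

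Next I show each vertex is feasible by exhibiting a common $c\in\bigcap_eJ_e(m)$, for example by evaluating the fiber pieces there and noting the max of the lower pieces is at most the min of the upper pieces. The vertex $(L,Y)$ is the apex where the oblique row meets $m_1=Y$. The vertex $(X,Y)$ is the rectangle corner. The vertex $(X,Y+\chi(X-L))$ is where the oblique row meets $m_0=X$. Convexity of $P$, as a projection of a polytope, then gives $P$ equal to the convex hull. This yields $W_0=X-L$ and $W_1=\chi W_0$.

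The gap follows by elementary geometry. $U_1-L_0=Y+\chi W_0-L$, while $\max_P(m_1-m_0)$ is attained at a vertex. The vertex $(L,Y)$ gives $Y-L$, which exceeds the values $Y-X$ and $Y+\chi W_0-X$ at the other two vertices because $\chi<1$. Hence $G=\chi W_0=\chi(W_0+W_1)/(1+\chi)$. Positivity of the tables and the $\zeta=0$ degeneration to a segment are direct substitution.

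The main obstacle is the certificate step. With three environments there are $36$ cross comparisons plus $24$ local rows, and each must be verified to be redundant uniformly for $\zeta\in[0,1/10]$. I will reduce this by noting that every row is affine in $m$ with coefficients rational in $\zeta$. Each slack at a vertex is then a rational function of $\zeta$ with positive denominator, and its numerator is a low-degree polynomial. I will check the sign of that polynomial on $[0,1/10]$, and the restriction $\zeta\le1/10$ is presumably where some numerator's root lies. If hand verification is unwieldy, I would record the exact slack polynomials and confirm their signs by interval arithmetic, as the paper does elsewhere with Fraction-based certificates.
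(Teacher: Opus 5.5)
Your overall architecture matches the paper's: valid defining rows give an outer triangle, the three vertices are shown feasible, convexity fills it in, and $\chi<1$ puts the ATE maximum at $(L,Y)$. However, the certificate step as you set it up would fail.

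\textbf{The oblique row.} Your slope for environment 1 is miscomputed. In fact $\alpha_1=(1-u_1)/D_1=(3-\zeta)-2=1-\zeta$, not $2-\zeta$, so $\alpha=(1,1-\zeta,1)$ and $\beta=(2,2,2-\zeta)$. With the correct values, every pair involving the target has normal ratio at most $\zeta/3$. The ratio $\chi=\zeta/(3-\zeta)$ comes only from the two perturbed environments compared with each other. The hypotenuse is $A_2\le C_1$, with coefficients $(\alpha_2-\alpha_1,\beta_1+\beta_2-1)=(\zeta,3-\zeta)$ and right side $k_1^C-k_2^A=(24-23\zeta-3\zeta^2)/40=-\zeta L+(3-\zeta)Y$. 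This is exactly $m_1-Y\le\chi(m_0-L)$. With your value of $\alpha_1$, no ordered pair has ratio $\zeta/(3-\zeta)$ anywhere on $(0,1/10]$, so the search you describe would come up empty. Your claim that the mismatch between the target and environment 1 or 2 produces $\chi$ is false under either set of slopes.

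\textbf{The case $\zeta=0$.} Your three rows cut out the triangle only when $\chi>0$. At $\zeta=0$, which the statement includes, the oblique row degenerates to $m_1\le Y$. Your system then describes the ray $\{m_1=Y,\ m_0\le X\}$, and nothing locates the apex. You need the fourth row the paper uses: environment 2's local bound $m_0\ge a_2/(1-\ell_2)=3(5+\zeta)/40=L$. Your row $m_0\le X$ is correct. It is environment 1's local row $a_1/(1-u_1)=9/20$, and also $B_0\le C_1$, whose $m_1$-coefficient $\beta_0-\beta_1$ vanishes; the paper cites the latter.

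\textbf{Comparison of the inner bound.} With these repairs, checking all half-plane slacks at the vertices is a legitimate alternative to the paper's inner bound. The paper instead writes down original-mass witnesses. It takes $c=(13-6\zeta)/20-(1-\zeta)x-y$, which is the value of $C_1$ at the vertex, together with $k_0=2c/3$, $k_1=b_1-(y-c)/(3-\zeta)$ and $k_2=\ell_2c$. It then verifies type masses and score slacks by factoring them into pieces nonnegative on $[0,1/10]$; the factor $1-10\zeta$ is where the upper limit on $\zeta$ enters. Your route relies on the exactness of the fiber half-plane representation and needs more sign checks. The paper's route produces causal realizations directly.
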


\begin{proof}
The slope parameters are $\alpha=(1,1-\zeta,1)$ and $\beta=(2,2,2-\zeta)$. Substitution in \eqref{eq:mismatch-size} gives $\chi=\zeta/(3-\zeta)$. Four inequalities from the exact fiber representation give
\[
 m_0\ge L,\quad m_1\ge Y,\quad m_0\le X,\quad
 -\zeta m_0+(3-\zeta)m_1\le-\zeta L+(3-\zeta)Y.
\]
They are, respectively, the environment-two lower $m_0$ bound, the environment-zero lower $m_1$ bound, $B_0\le C_1$, and $A_2\le C_1$. Their intersection is exactly the displayed triangle. It remains to verify that its three vertices satisfy every other restriction, so the outer triangle is the full model's attainable region.

For each of the three proposed vertices $(x,y)$, put
\[
 c=\frac{13-6\zeta}{20}-(1-\zeta)x-y,\qquad
 q=(1-x-y+c,y-c,x-c,c).
\]
In vertex order $(L,Y),(X,Y),(X,Y+\chi(X-L))$, the three dependence
coordinates are respectively
\[
 c_0=\frac{3(1+\zeta^2)}{40},\qquad
 c_1=\frac{3\zeta}{20},\qquad
 c_2=\frac{3\zeta(5-\zeta)}{40(3-\zeta)}.
\]
Write $a_e=p_e(0,1)$ and $b_e=p_e(1,1)$, and choose
\[
 k_0=2c/3,\qquad k_1=b_1-\frac{y-c}{3-\zeta},\qquad
 k_2=\ell_2c.
\]
The selection masses are
\[
 z_e=(1/2-b_e-x+a_e+k_e,\ b_e-k_e,\ x-a_e-k_e,\ k_e).
\]
These expressions reproduce all four cells in every environment
and normalize both $q$ and the treatment mass exactly.

The score inequalities can be checked by substitution without
optimization. After putting denominators in terms of $3-\zeta>0$,
every nonzero type mass and score slack factors into positive
constants, powers of $\zeta$, linear factors nonnegative on
$[0,1/10]$, and the following positive quadratic factors:
\[
\begin{gathered}
1+\zeta^2,\ (1-\zeta)^2,\ 5-3\zeta^2,\ 1+4\zeta-\zeta^2,\ 4+\zeta-\zeta^2,\\
8-29\zeta+3\zeta^2,\ 20-3\zeta+3\zeta^2,\ 14+3\zeta+3\zeta^2,\\
12-3\zeta-\zeta^2,\ 9+15\zeta-5\zeta^2.
\end{gathered}
\]
The linear factors with a negative slope are among
\[
\begin{gathered}
 1-10\zeta,\ 1-\zeta,\ 2-\zeta,\ 3-\zeta,\ 5-\zeta,\ 9-\zeta,\ 21-\zeta,\\
 4-3\zeta,\ 6-5\zeta,\ 5-23\zeta,\ 4-13\zeta,\ 23-3\zeta,\ 33-13\zeta.
\end{gathered}
\]
Their nonnegativity is immediate at $\zeta\leq1/10$. For example,
$8-29\zeta+3\zeta^2\geq51/10$ and
$12-3\zeta-\zeta^2\geq1169/100$; the other quadratic factors have equally
direct bounds. Thus $q\geq0$ and $\ell_eq\leq z_e\leq u_eq$
throughout the real parameter interval.

Consequently all three vertices have original-model causal
realizations. Convexity makes their entire triangle feasible.
The four valid defining rows already contain the full sharp mean
region in that triangle, proving exact equality and all claimed
extrema.

Since $\chi<1$, the upper ATE is attained at $(L,Y)$ and equals $Y-L$. The marginal subtraction upper is $Y+\chi(X-L)-L$. Their difference gives \eqref{eq:sharp-coefficient-equality}. The three tables have positive entries throughout the interval by their explicit formulas. Finally $X-L=3(1-\zeta)/40$ does not vanish as $\zeta\downarrow0$. Every sufficiently small positive $\chi$ occurs under $\zeta=3\chi/(1+\chi)$, proving that no uniformly smaller coefficient of $W_0+W_1$ works at those mismatch values.
\end{proof}

The continuum polynomial check and exact rational original-mass reconstructions are retained in \path{research19/code/verify_sharp_mismatch.py} and \path{research19/results/sharp_mismatch_certificate.json}. An independent reconstruction uses the explicit original mass constraints and has 96 continuum sign certificates in \path{research19/results/independent_sharp_mismatch_review.json}. These records verify the implementation; the four defining inequalities and the nonnegative slack calculation establish the theorem over the entire real parameter interval. Complementing both binary outcomes exchanges the two ATE endpoints while leaving the envelope design unchanged, and therefore gives the same optimal coefficient for the lower composition gap. The matching construction establishes optimality of the width-dependent coefficient in the primitive-envelope class. Its envelopes are asymmetric; the symmetric JSM is a further restricted subclass.

\section{Robustness of the universal composition boundary}
\label{sec:robust-universal-failure}

The construction in Section~\ref{sec:universal-necessity} gives more
than an isolated counterexample. Its strict local geometry implies
failure on an open set of observed laws, and the failure persists
under sufficiently small positive drift budgets. This is a
robustness consequence of that construction, rather than a separate
characterization of universal composition.

Fix $K\geq2$ environments with target $0$, binary treatment and
outcome, and primitive restrictions
\[
 0<\ell_e<\pi_e<u_e<1\qquad(e=0,\ldots,K-1).
\]
Assume that at least two pairs $(\ell_e,u_e)$ differ. The envelopes
and propensities remain fixed throughout this section. Let
$\mathcal P_\pi^+$ denote the strictly positive observed tables
satisfying
\[
 p_e(0,0)+p_e(0,1)=1-\pi_e,\qquad
 p_e(1,0)+p_e(1,1)=\pi_e.
\]
Equip this set with the topology relative to these affine row-sum
constraints. At a star-TV budget
$\boldsymbol\delta=(\delta_e)_{e=1}^{K-1}\geq0$, the model has laws
$q_e$, selection masses $z_e$, and restrictions
\[
 \ell_e q_{et}\leq z_{et}\leq u_e q_{et},\qquad
 \operatorname{TV}(q_e,q_0)\leq\delta_e.
\]
Let $L_0(p,\boldsymbol\delta)$ and
$U_1(p,\boldsymbol\delta)$ be the lower target mean under treatment
zero and upper target mean under treatment one, respectively, both
optimized over this \emph{full} model. Let
$U_\tau(p,\boldsymbol\delta)$ be its sharp target ATE upper.

\begin{theorem}[Robust failure for every mismatch]\label{thm:robust}
There exist $p^*\in\mathcal P_\pi^+$, $r_*>0$, $d_*>0$, and
$g_*>0$ such that the model is nonempty and
\begin{equation}\label{eq:robust-gap}
 U_1(p,\boldsymbol\delta)-L_0(p,\boldsymbol\delta)
       -U_\tau(p,\boldsymbol\delta)\geq g_*
\end{equation}
whenever
\[
 p\in\mathcal P_\pi^+,\quad
 \|p-p^*\|_\infty<r_*,\quad
 0\leq\delta_e\leq d_*\quad(e=1,\ldots,K-1).
\]
In particular, at zero drift the tables with a strict composition
gap contain a nonempty relatively open set. The same open set has
a strict gap for every sufficiently small positive star-TV budget.
The constants depend on the fixed design and construction; no
uniform lower gap or radius as envelopes approach alignment is
asserted.

\end{theorem}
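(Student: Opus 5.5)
We start from the construction in Section~\ref{sec:universal-necessity} and take $p^*$ to be its strictly positive tables. At the mean point $m^*$ of that witness, exactly one cross-fiber row is active, namely $A_e\le C_f$ or $A_e\le H_f$. All other local-rectangle rows and fiber comparisons have strictly positive slack. The plan is to show that this strict geometry survives two kinds of perturbation: a small change of the observations within $\mathcal P_\pi^+$, and a small positive drift budget. The zero-drift part is a parametric-LP continuity argument. The drift part needs an explicit upper bound on the effect of drift.

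\emph{Step 1 (zero drift, open set).} We use the exact half-plane representation $n_j^\top m\le h_j(p)$. Each right-hand side is affine in the observation coordinates $y=(a_i,b_i)_i$, while the normals depend only on the fixed design. Following the moving-witness idea of Appendix~\ref{sec:mirror-design-information}, we define the two witnesses
\[
 v^\pm(p)=m^\pm(p)\pm t(b,a),
\]
where $m^\pm(p)$ is the point obtained by solving the active row at equality along a fixed transversal direction. Concretely, we take the affine point $m(p)=m^*+\lambda(p)(a,-b)/(a^2+b^2)$, with $\lambda(p)$ chosen to absorb the change in the active right-hand side.

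Because every other slack is strictly positive at $m^*\pm t(b,a)$ and depends continuously on $p$, there is a radius $r_1$ on which both witnesses stay feasible. On that radius the active row, unchanged in direction, still holds at every feasible point. The corner-violation argument of Section~\ref{sec:universal-necessity} then gives a zero-drift gap of at least $2\min(a,b)t$.

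\emph{Step 2 (drift).} Drift enlarges the feasible set, so the witnesses remain feasible and $L_0$ and $U_1$ can only improve. This gives the inequality
\[
 U_1-L_0\ge (m_1^*+ta)-(m_0^*-tb)+O(r).
\]
What remains is to bound the growth of $U_\tau$. For this we mimic the dual identity of Appendix~\ref{app:decision-information}. Write the active comparison in the original mass variables as a nonnegative combination of selection slacks. With laws $q_e\ne q_0$, the comparison $A_e(q_e)\le C_f(q_f)$ picks up an error term of the form $\varphi^\top(q_f-q_e)$, where $\varphi$ has bounded coefficients depending only on the design. This error is at most $\kappa(\delta_e+\delta_f)$, using the triangle inequality through $q_0$.

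The target means are $q_0$-means, and the row involves $m$ evaluated under $q_e$ and $q_f$. Replacing those means by target means costs a further $O(\delta)$. We therefore obtain the relaxed row
\[
 -a(m_0-m_0^*)+b(m_1-m_1^*)\le \kappa'\max_e\delta_e+O(\|p-p^*\|_\infty),
\]
valid for the target means of every drift-feasible model. Repeating the corner-violation step gives
\[
 G\ge 2\min(a,b)t-C(r+d).
\]
Choosing $r_*$ and $d_*$ small then yields the bound $g_*=\min(a,b)t$.

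\emph{Main obstacle.} The hard part is Step 2's derivation of the relaxed active row in the original mass variables under drift. We must check that the cross-fiber comparison, originally derived with a common $c=q_{11}$, is a valid linear consequence of the mass constraints with an explicit $\kappa$. The route we would try first is to express the two fiber pieces as $c^*$ plus or minus specific group cells, as in the identities for $A_i,B_i,C_i,H_i$ in Section~\ref{sec:universal-necessity}. We would then redo that expansion with $c$ replaced by $q_{e,11}$ and $q_{f,11}$ respectively. Their difference is at most $\delta_e+\delta_f$ in absolute value, and the two mean pairs differ from $m$ by at most $2\delta$ in each coordinate. Collecting these gives $\kappa'$ depending only on $(\alpha,\beta)$. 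Nonemptiness of the model is automatic, since the zero-drift witnesses remain feasible.
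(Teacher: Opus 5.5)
Your proof is correct, but it takes a different route from the paper's. After fixing $p^*$, the paper leaves mean space:
\begin{itemize}
\item it builds a strictly feasible center $x^c$;
\item it mixes the \emph{full-model} extremizers $x^L,x^H$ at $(p^*,\boldsymbol 0)$ with $x^c$, obtaining strictly feasible witnesses whose crossed mean exceeds $U^*$ by more than $G/2$;
\item it absorbs table perturbations with the explicit selection correction $dz_e=(da_e-db_e,db_e,-da_e,0)$, holding the laws fixed;
\item it controls $U_\tau$ only qualitatively, through upper semicontinuity of the ATE maximum at $(p^*,\boldsymbol 0)$ by compactness.
\end{itemize}

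You instead stay with the geometry of Appendix~\ref{sec:universal-necessity}. Your moving witnesses $m(p)\pm t(b,a)$ remain feasible because the half-plane representation is exact for every table with the prescribed propensities. You then replace semicontinuity by a drift-relaxed copy of the single active row. Finally you rerun the corner argument with the perturbed model's own $L_0,U_1$. That is why the paper's warning, that local tangent witnesses need not beat the global ATE upper, does not affect you.

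The step you flag as the main obstacle does go through. The identities
\[
A_i=q_{i,11}-(q_i^{\rm high})_{11}-(q_i^{\rm low})_{00},\qquad
C_i=q_{i,11}+(q_i^{\rm low})_{10}+(q_i^{\rm high})_{01},
\]
and $H_i=q_{i,11}+(q_i^{\rm low})_{01}+(q_i^{\rm high})_{10}$ follow algebraically from environment $i$'s own observation equations. They therefore hold for every local witness, not only at $m^*$. Hence under drift $A_e(m^{(e)})\le q_{e,11}$, and $q_{f,11}\le C_f(m^{(f)})$ or $H_f(m^{(f)})$. Single-type masses and the $m_a$ are event probabilities, so they move by at most the TV budget. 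The active row, evaluated at the target means, is then violated by at most $(1+\alpha_e+\beta_e)\delta_e+(\alpha_f+\beta_f)\delta_f$ in the $C_f$ case. In the $H_f$ case the coefficient on $\delta_f$ is $\alpha_f+\beta_f+2$, and $\delta_0=0$ throughout.

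Your route buys explicit constants: $g_*=\min(a,b)t$, with $r_*$ and $d_*$ read off from the strict slacks and $\kappa'$. It is essentially a one-row version of the dual certificate in Appendix~\ref{app:computable-radius}. The paper's route is nonquantitative but generic. It needs only a positive nominal gap and a strictly feasible center, not the single-active-row structure of the construction.
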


\begin{proof}
\emph{A strict center at the constructed tables.}
Use the fixed-propensity construction of
Section~\ref{sec:universal-necessity} to obtain positive tables
$p^*$ and a mean point $m^*$ at which one cross-fiber inequality is
active. Every local-rectangle inequality and every other
lower-versus-upper fiber comparison is strict. The active
inequality has the form
\[
 -a(m_0-m_0^*)+b(m_1-m_1^*)\leq0,
 \qquad a,b>0.
\]
For sufficiently small $s>0$, the point
$m^c=m^*+s(a,-b)$ makes this inequality strict while preserving all
the others. Thus $m^c$ lies strictly inside every local rectangle,
and the largest lower fiber endpoint is strictly less than the
smallest upper endpoint. Choose $c$ strictly between them.

For each environment, the low and high group marginals are
strictly between zero and their group masses. Each group's binary
coupling interval therefore has nonempty interior. Since $c$ lies
strictly inside the sum of the two intervals, it can be written as
a sum of points strictly inside the two group intervals. These
choices give two group laws whose four coordinates are positive.
They sum to the same law
\[
 q^c=(1-m_0^c-m_1^c+c,\ m_1^c-c,\ m_0^c-c,\ c)
\]
in every environment, and their reconstructed selection masses
satisfy
\begin{equation}\label{eq:robust-strict-center}
 q_t^c>0,\qquad
 z_{et}^c-\ell_e q_t^c>0,\qquad
 u_e q_t^c-z_{et}^c>0.
\end{equation}
This is strict feasibility for the mass and selection restrictions
in the common-law formulation. The zero-drift inequalities
in an expanded star formulation are necessarily tight at zero;
strictness of those inequalities is not claimed or needed.

\emph{Two strict witnesses with a separating crossed mean.}
At $(p^*,\boldsymbol0)$, write
\[
 L^*=L_0(p^*,\boldsymbol0),\qquad
 H^*=U_1(p^*,\boldsymbol0),\qquad
 U^*=U_\tau(p^*,\boldsymbol0).
\]
The universal necessity theorem gives
$G=H^*-L^*-U^*>0$. Choose feasible witnesses $x^L,x^H$ attaining
$L^*,H^*$; they exist by compactness. Mix each with the strict center
$x^c$ at the same weight $0<\theta<\min\{1/2,G/4\}$:
\[
 \widetilde x^L=(1-\theta)x^L+\theta x^c,\qquad
 \widetilde x^H=(1-\theta)x^H+\theta x^c.
\]
Both mixed witnesses still reproduce $p^*$ and have a common law
across environments within each witness. Every inequality in
\eqref{eq:robust-strict-center} is now strict for both. Their
crossed mean is
\[
 C=\mu_1(\widetilde x^H)-\mu_0(\widetilde x^L)
   =(1-\theta)(H^*-L^*)+\theta\tau(x^c).
\]
As $H^*-L^*$ and $\tau(x^c)$ belong to $[-1,1]$,
\begin{equation}\label{eq:robust-cross-gap}
 g:=C-U^*\geq G-2\theta>G/2>0.
\end{equation}
It is important here to use the full-model coordinate extremizers:
two points from the local tangent construction alone need not have
a crossed mean above the \emph{global} ATE upper.

\emph{An explicit observation correction.}
Write a nearby fixed-propensity table as
\[
 p_e-p_e^*=
 \begin{pmatrix}-d a_e&d a_e\\-d b_e&d b_e\end{pmatrix}.
\]
For each mixed witness keep all its $q_e$ fixed and change the
selection vector in environment $e$ by
\begin{equation}\label{eq:robust-z-correction}
 d z_e=(d a_e-d b_e,\ d b_e,\ -d a_e,\ 0).
\end{equation}
This correction produces the stated change in every observed cell
and satisfies $\sum_t d z_{et}=0$. Moreover
$\|d z_e\|_\infty\leq2\|p-p^*\|_\infty$.
Let $s_*>0$ be the smallest selection slack
$z_{et}-\ell_e q_{et}$ or $u_e q_{et}-z_{et}$ among these two
mixed witnesses and all environments and types. If
$2\|p-p^*\|_\infty<s_*$, both corrected witnesses remain feasible.
Their positive $q_e$ are unchanged; the strict score bounds imply
$0<z_{et}<q_{et}$. Their within-witness drift remains zero, so they
are feasible for every nonnegative star budget. Their target means
are unchanged, giving
\begin{equation}\label{eq:robust-oracle-witness}
 U_1(p,\boldsymbol\delta)-L_0(p,\boldsymbol\delta)\geq C.
\end{equation}

\emph{Upper semicontinuity completes the persistence argument.}
All latent masses lie in a fixed compact cube and the constraints are
continuous in $(p,\boldsymbol\delta)$. If maximizing models along a
sequence approaching $(p^*,0)$ had objective above $U^*+g/2$, a
convergent subsequence would give a zero-drift model at $p^*$ with
objective above $U^*$, a contradiction. Hence sufficiently small
$r_*,d_*>0$ guarantee $U_\tau(p,\boldsymbol\delta)<U^*+g/2$.
Let $p_{\min}=\min_{e,a,y}p_e^*(a,y)>0$. Choose $r_*$ also below $p_{\min}/2$ and $s_*/4$ so the corrected
strict witnesses remain positive and feasible. Combining this upper
with \eqref{eq:robust-oracle-witness} and $C-U^*=g$ proves
\eqref{eq:robust-gap} with $g_*=g/2$, including nonemptiness.
\end{proof}

Appendix~\ref{app:computable-radius} supplies a single quantitative
version using a common-law dual and source coefficient ranges.
The constants depend on the fixed design and data; they can vanish
as mismatch vanishes. These observation neighborhoods fix propensities and envelopes and
certify persistence of the composition gap. The paired sign
neighborhoods are established separately in Appendix~\ref{app:decision-information}.

\section{A computable neighborhood certificate}
\label{app:computable-radius}

The robust-failure conclusion can be made quantitative using one linear-programming dual bound \cite{bertsimas} and two feasible witnesses. This certificate keeps the primitive score envelopes and treatment propensities fixed. It allows observed-table perturbations within the fixed-propensity slice and positive source-drift budgets. The radii are sufficient bounds, not maximal robustness radii.

\paragraph{Matrix convention.}
At zero drift write $x=(q,z_0,\ldots,z_{K-1})\geq0$, with types ordered $00,01,10,11$. The nominal constraints are
\[
Ex=f,\qquad Ax\leq0,
\]
where $E$ contains $\mathbf1^\top q=1$ and the four observed-cell equations per environment, while $A$ contains the rows $\ell_e q-z_e\leq0$ and $z_e-u_e q\leq0$. Let $c^\top x=-\tau_0$. A dual-feasible triple satisfies
\begin{equation}\label{eq:radius-dual}
c=E^\top\nu+A^\top\lambda+r,
\qquad \lambda\leq0,\quad r\geq0,
\end{equation}
with unrestricted $\nu$. It certifies the upper bound $\bar U=-\nu^\top f$ on the ATE; dual optimality is not required.

Let $\nu_e=(\nu_{e00},\nu_{e01},\nu_{e10},\nu_{e11})$ denote the four observation multipliers. For source $e>0$, let $E_{e,q}$ and $A_{e,q}$ be the columns corresponding to $q$ in that environment's observation and selection rows. Define
\begin{equation}\label{eq:radius-constants}
g_e=E_{e,q}^{\top}\nu_e+A_{e,q}^{\top}\lambda_e,
\quad D_e^\dagger=\max_tg_{et}-\min_tg_{et},
\quad
B_{\rm obs}=\sum_{e=0}^{K-1}\bigl(|\nu_{e01}-\nu_{e00}|+|\nu_{e11}-\nu_{e10}|\bigr).
\end{equation}
Total variation controls differences of bounded-function expectations \cite{gibbs}. Thus the drift penalties can be computed directly from a common-law dual without solving an additional drift-constrained program.

\paragraph{Certificate.}
Suppose $x^L,x^H$ are nominal feasible witnesses and $x^c$ is a nominal feasible witness with strictly positive type masses and strictly positive selection slacks. Put
\[
C_0=\mu_1(x^H)-\mu_0(x^L),\qquad G_{\rm cert}=C_0-\bar U>0,
\qquad t_c=\mu_1(x^c)-\mu_0(x^c).
\]
The witnesses need not optimize the marginal objectives. If they do and the dual is optimal, $G_{\rm cert}$ is the nominal oracle-composition gap. Set
\[
\omega=\frac{G_{\rm cert}}{2(C_0-t_c)},\qquad
\widetilde x^L=(1-\omega)x^L+\omega x^c,\qquad
\widetilde x^H=(1-\omega)x^H+\omega x^c.
\]
Put $\Delta=G_{\rm cert}/2$, the mixed crossed mean minus $\bar U$. Let $s>0$ be the smallest selection slack across these two mixed witnesses and let $p_{\min}>0$ be the smallest nominal observed cell. Consider perturbations
\[
p'_e-p_e=(-\epsilon_{e0},\epsilon_{e0},-\epsilon_{e1},\epsilon_{e1}),\qquad |\epsilon_{e0}|,|\epsilon_{e1}|\leq\rho,
\]
where cells are ordered $00,01,10,11$. Write $U_\tau(p',\delta)$ for the full-model ATE upper endpoint and $B_{\rm oracle}(p',\delta)$ for the upper endpoint from its coordinate ranges. If
\[
0\leq\rho\leq\min\{p_{\min}/2,s/4\},\qquad \delta_e\geq0,
\]
then the perturbed model is nonempty and
\begin{equation}\label{eq:computable-radius-gap}
B_{\rm oracle}(p',\delta)-U_\tau(p',\delta)
\geq\Delta-B_{\rm obs}\rho-\sum_{e>0}D_e^\dagger\delta_e.
\end{equation}
In particular, with a zero denominator interpreted as $+\infty$, choose
\begin{equation}\label{eq:computable-radius-choice}
\rho_* =\min\left\{\frac{p_{\min}}2,\frac s4,\frac{G_{\rm cert}}{8B_{\rm obs}}\right\},
\qquad
D_* =\min\left\{1,\frac{G_{\rm cert}}{8\sum_{e>0}D_e^\dagger}\right\}.
\end{equation}
Every table satisfying the displayed perturbation bounds with $\rho\leq\rho_*$, and every source budget $0\leq\delta_e\leq D_*$, has an oracle-composition gap of at least $G_{\rm cert}/4>0$.

\begin{proof}
Dual feasibility gives $t_c\leq\bar U<C_0$, hence $0<\omega\leq1/2$. The mixed witnesses have crossed-mean value
\[
\widetilde C=\mu_1(\widetilde x^H)-\mu_0(\widetilde x^L)
=(1-\omega)C_0+\omega t_c=\bar U+G_{\rm cert}/2.
\]
This deliberate mixing spends half of the certified nominal gap to obtain strictly feasible witnesses; it does not assert that the perturbed sharp gap loses half its value.

For each mixed witness, keep its common $q$ fixed and change its environment-$e$ selection vector by
\[
\Delta z_e=(\epsilon_{e0}-\epsilon_{e1},\ \epsilon_{e1},\ -\epsilon_{e0},\ 0).
\]
The four observation equations change by exactly $(-\epsilon_{e0},\epsilon_{e0},-\epsilon_{e1},\epsilon_{e1})$. Each selection mass changes by at most $2\rho$, so every selection slack remains at least $s-2\rho\geq s/2$. The corrected witnesses have unchanged means and zero drift and are therefore feasible for all the stated nonnegative drift budgets. The perturbed oracle upper endpoint is at least $\widetilde C$.

For an arbitrary perturbed feasible model, separate the common $q$ into $q_0,\ldots,q_{K-1}$. Assign the common normalization multiplier and the common-$q$ reduced-cost term in \eqref{eq:radius-dual} to $q_0$. Expanding that coefficient identity gives
\begin{align*}
-\tau_0={}&\nu_{\rm norm}+\sum_e\nu_e^\top p'_e
 +\sum_e\lambda_e^\top A_e(q_e,z_e)
 +r_q^\top q_0+\sum_e r_{z_e}^\top z_e
 +\sum_{e>0}g_e^\top(q_0-q_e).
\end{align*}
The selection and reduced-cost terms are nonnegative. Since $q_0-q_e$ has total mass zero,
$g_e^\top(q_0-q_e)\geq-D_e^\dagger\delta_e$. The observation perturbation changes $\nu^\top f$ by at most $B_{\rm obs}\rho$. It follows that
\[
U_\tau(p',\delta)\leq\bar U+B_{\rm obs}\rho+\sum_{e>0}D_e^\dagger\delta_e.
\]
Subtracting this from $\widetilde C$ proves \eqref{eq:computable-radius-gap}. The choices in \eqref{eq:computable-radius-choice} allocate at most $G_{\rm cert}/8$ each to observation perturbation and drift, giving the final $G_{\rm cert}/4$ guarantee.
\end{proof}

\paragraph{Verified instance.}
For the two-environment construction with $\Gamma=2$ and $\pi=(1/4,3/4)$, the exact nominal upper gap is $G_{\rm cert}=13/320$. Rationally verified witnesses and a dual give
\[
B_{\rm obs}=139/30,\qquad D_1^\dagger=1,\qquad
\rho_* =\frac{8619}{22328320},\qquad D_* =\frac{13}{2560}.
\]
Consequently the rounded-down radii $\rho=0.000386$ and $D=0.005078$ certify, simultaneously for every allowed table perturbation and every source budget $0\leq\delta_1\leq D$, a composition gap of at least
\[
G_{\rm cert}/4=13/1280=0.01015625.
\]
The supplied code reconstructs the strict center over the rationals and checks both mixed witnesses, the observation correction, the dual decomposition, and the radius inequalities exactly. Its saved certificates also cover four further nonaligned constructions, including two near alignment and a three-environment model. The near-aligned examples have correspondingly smaller certified radii. These certificates bound population composition loss at externally specified drift budgets. Appendix~\ref{app:revision-evidence} gives the separate finite-sample confidence construction.

\section{Composition loss under aligned envelopes and drift}
\label{sec:drift-rectangularity-boundary}
The following construction quantifies the role of response-law stability in Theorem~\ref{thm:universal}. With aligned score limits, allowing response laws to differ within a TV budget creates a composition gap linear in that budget. The observed tables remain strictly positive throughout the stated parameter range.

Fix any $\Gamma\in[2,4]$, $\ell=1/(\Gamma+1)$, $u=\Gamma/(\Gamma+1)$,
$L=9/40$, $M=9/20$, $T=1-L-M=13/40$, and $c=3/20$.
Both propensities are $1/2$. Take target observations
$a_0=1/10$, $b_0=1/4$, and source observations
$a_1=uL$, $b_1=1/2-uM$. All four cells in both environments are
strictly positive. For every $0\leq\delta\leq1/100$ the full sharp
extrema are
\[
 L_0=L-\delta,\qquad U_1=1-M+\delta,\qquad
 U_{\rm ATE}=T+(1+1/\Gamma)\delta,
 \qquad G_\delta^+=(1-1/\Gamma)\delta.
\]
Thus a uniformly $o(\Delta)$ drift term is impossible, including at
alignment. At fixed $\Gamma=4$ the displayed family has exact gap $3\delta/4$.

\begin{proof}
The source cells $a_1=uL$ and $p_1(1,0)=uM$ give its bounds
$m_{0,1}\geq L$, $m_{1,1}\leq1-M$, hence the first two target bounds
by TV. Also
\[
 u(L+M)=a_1+p_1(1,0)
  =q_{1,10}+q_{1,11}+z_{1,00}-z_{1,11}
  \leq q_{1,10}+u(q_{1,00}+q_{1,11}).
\]
Rearranging gives
$\phi(q_1)=q_{1,01}-q_{1,10}/\Gamma\leq T$.
Its coefficient range is $1+1/\Gamma$. Thus
$\operatorname{ATE}(q_0)\leq\phi(q_0)\leq T+(1+1/\Gamma)\delta$.
The following original-mass witnesses attain these three bounds.
In every case $q_0-q_1=(0,\delta,-\delta,0)$, so TV equals $\delta$.

For the lower $m_0$ witness use
\[
 q_1=(M-L/\Gamma,T+L/\Gamma,L,0),\quad
 q_0=q_1+(0,\delta,-\delta,0),
\]
\[
 z_1=(uM-\ell L,b_1,\ell L,0),\quad
 z_0=(1/8+\delta,1/4,1/8-\delta,0).
\]
For the upper $m_1$ witness use
\[
 q_1=(M-c-\Gamma\delta,T+c/\Gamma+\delta,
       c+\Gamma\delta,L-c/\Gamma-\delta),\quad
 q_0=q_1+(0,\delta,-\delta,0).
\]
With $k=u q_{0,11}$, put
\[
 z_1=(u q_{1,00},b_1-\ell q_{1,11},u q_{1,10},\ell q_{1,11}),
\]
\[
 z_0=(7/20-q_{0,10}-\ell q_{0,11},1/4-k,
       q_{0,10}+\ell q_{0,11}-1/10,k).
\]
For the upper ATE witness use
\[
 q_1=(M-\delta/\Gamma,T+\delta/\Gamma,\delta,L-\delta),\quad
 q_0=q_1+(0,\delta,-\delta,0),
\]
\[
 z_1=(u q_{1,00},b_1-\ell q_{1,11},\ell\delta,\ell q_{1,11}),
 \qquad z_0=(1/4,1/8+\delta,0,1/8-\delta).
\]
Normalization and observations hold identically. Throughout the
displayed parameter rectangle every $q$ is nonnegative and
$\ell q\leq z\leq uq$, as certified by the exact polynomial signs
described below. Substitution attains each claimed endpoint.
The marginal-box endpoint is $T+2\delta$, proving the gap formula.
\end{proof}

\paragraph{Exact check.}
The analytic identities above give the bounds and attaining objectives. The remaining mass and score inequalities over all real $(\Gamma,\delta)\in[2,4]\times[0,1/100]$ have exact rational Bernstein sign certificates in \path{research22/theory/results/aligned_drift_lower_certificate.json}, reconstructed by \path{research22/theory/code/verify_aligned_drift_lower.py}. Outcome complementation supplies the lower-gap example. The family proves that a uniform drift bound must retain a first-order term; its coefficient is exact for the displayed tables.

\section{Coverage and the frozen sampling protocol}
\label{app:revision-evidence}

Set inference includes asymptotic confidence regions \cite{cht,romanoshaikh}, moment-inequality methods \cite{andrewssoares}, and calibrated projection intervals \cite{kaido}. Here we enclose observable probabilities and project the resulting causal feasible set to cover the whole population ATE interval.

\paragraph{Coverage of the common outer model.}
For $K$ independent multinomial count tables, construct two-sided Clopper--Pearson intervals \cite{cp} for four cell events, treatment, and the unions $\{(0,0),(1,0)\}$ and $\{(0,0),(1,1)\}$. Each interval has error probability at most $\alpha/(7K)$; independence among these seven events is not needed. A union bound yields an event $\mathcal E$ of probability at least $1-\alpha$ on which all true event probabilities lie in their intervals. We constrain the induced observation probabilities of the original-mass model to these intervals. For a common JSM, use $[\ell(\pi_e^-;\Gamma),u(\pi_e^+;\Gamma)]$; for primitive envelopes, keep those specified absolute limits. Monotonicity of both JSM endpoint functions ensures that the true envelope is contained in this outer envelope.

On $\mathcal E$, every population-compatible causal model belongs to the same outer model. All its coordinate and ATE extrema therefore contain their population counterparts, and its ATE interval contains the whole population identified interval. Declaring negativity only when this model is nonempty and its upper is negative has false-declaration probability at most $\alpha$ under the composite null that the population identified interval admits a nonnegative ATE. Empty models make no declaration. This coverage guarantee follows from outer-region inclusion and the union bound. The numerical implementation uses floating-point quantiles and LP solutions with residual checks; it does not provide outward-certified arithmetic.

\paragraph{Frozen sampling of the population sign families.}
\path{research23/PROTOCOL.md} was fixed before new sampling. Proposition~\ref{thm:insufficiency} uses $t=\pm1$, $\Gamma=2$, $\pi=(1/4,3/4)$, and budgets $0,1/320$. Its four settings are retained, including the positive-$t$ controls. The mirror settings fix $(l,u,p)=(1/7,2/5,1/4)$, $\rho=5/12$, $\epsilon=\epsilon_0/2$ from the exact finite-minimum construction, and $t=\pm\epsilon/2$, with zero drift. Their primitive selection envelopes are fixed inputs, while their observed propensities are constrained by the confidence intervals.

Each of the six settings uses 200 trials at each $n=3000,10000,30000$ per environment, with seed 2026100423. The declaration threshold is $-10^{-7}$. All 3600 trial pairs (7200 environment count tables), six numerical extrema, residuals and coverage indicators are preserved in \path{research23/abstract_family_raw.jsonl}; settings and software versions are in \path{research23/sampling_manifest.json}. The summaries are in \path{research23/abstract_family_summary.json}. Every setting has $0/200$ negative declarations for either method at each sample size. A pointwise 95\% binomial Monte Carlo interval for $0/200$ is $[0,.01828]$. These are pointwise Monte Carlo intervals for declaration probabilities at the fixed settings under the reported confidence procedure.

\paragraph{Verification.}
\path{research23/verify_experiments.py} assembles the absolute-TV constraints independently and checks the six extrema of every saved trial pair. Counts, maximum discrepancies and primal--dual residuals are in \path{research23/experiment_verification.json}. The two independently assembled constraint systems use the same SciPy/HiGHS backend. Their agreement verifies constraint assembly, with numerical accuracy assessed through the reported primal--dual residuals.

Historical mechanism and confidence diagnostics remain in the accompanying code archive with their protocols and saved numerical rows. Its README indexes these records separately from the frozen experiment reported here.

\clearpage
\section{Calculation, scope and reproduction}
\subsection{Calculate the supplied observation table}
\label{app:workflow-records}
Run the population diagnostic from the reproducibility package root:
\begin{verbatim}
python research24/diagnose.py \
  research24/examples/moderate_propensity_Gamma2.json \
  --output diagnosis.json
python research24/verify_workflow.py
\end{verbatim}
The JSON input specifies treatment-row, outcome-column tables and primitive \texttt{ell,u} or $\texttt{gamma}>1$. Fraction strings preserve exact inputs; decimals are certified as their stated values. The routine uses aligned formulas or certified three-variable bases, checks both ATE corners and returns rational endpoints, causal witnesses or conflicting pieces. Failed basis certification returns no exact conclusion. Count-table inference and positive drift use the separate full model.

Eight inputs combine seven archived boundary examples with Section~\ref{sec:workflow}'s unaligned exact-composition table. Complementation, environment reversal and duplication give 32 checks: 192 endpoints agree with the original-mass LP to $4.285\times10^{-16}$, and 120 primal--dual certificates are rationally verified. Original causal witnesses and eight invalid-input rejections are also checked. \path{research24/workflow_verification.json} and \path{research24/examples} retain the records; \path{research24/README.md} specifies certificate failure handling. Fiber separation and ATE loss are reported separately.

\subsection{Relation to prior work}
\label{app:relation-scope}
Binary-IV reviews organize bounds by assumptions \cite{swanson}; instrumentality tests \cite{bonet} and observable constraints \cite{ramsahai} address compatibility, while graphical relaxations support causal bounding \cite{silvaevans}. Automated polynomial optimization handles general discrete models \cite{duarte}. Song and coauthors characterize categorical IV compatibility and variation dependence \cite{song}. Our finite assignment bounds restrict that parent model: feasible $(q,z_e)$ satisfy its characterization, but a parent-compatible $q$ may fail selection constraints.

Richardson--Robins' binary IV mean region is rectangular \cite{rr}. At Proposition~\ref{thm:insufficiency}'s $t=\delta=0$ tables, that parent region is $[3/8,5/8]\times[23/80,43/80]$; the pair-selection submodel is a hexagon with ATE upper $0$ and marginal upper $11/480$ (Appendix~\ref{app:uniform-mismatch}). Theorem~\ref{thm:universal}'s outcome-specific comparison locates the boundary in restrictions on the whole response pair.

Sensitivity analysis varies assumptions about hidden covariates \cite{rrsens} or assignment bias within matched sets \cite{rosenbaum87}. Recent work studies robust bound estimation \cite{dgk,tan24}, additional outcome restrictions \cite{enhanced}, and generalized models \cite{frauen}. Single-environment simultaneous attainment \cite{dg,mpr} is our starting point; a shared response law adds cross-environment coupling.

Missing-data bounds \cite{horowitz} concern incomplete observation, whereas corner loss here can occur with fully observed tables. Fibers test composition for a supplied zero-drift table; our examples establish information loss, while failure frequencies across application populations remain unmeasured.

\subsection{Verification and environment}
\label{app:reproducibility}
The accompanying reproducibility README gives execution commands, dependency pins and a map of the code, numerical records and experiment protocols. Rational audits reconstruct the corner examples, sign-family witnesses and perturbation certificates. Symbolic interval checks cover the primitive matching triangle and aligned-drift masses. The analytic proofs establish the universal statements. Primal--dual residuals document numerical accuracy of the confidence-LP calculations.

CPU software versions are Python 3.12.14, NumPy 2.3.5, SciPy 1.18.1, SymPy 1.14.0 and Matplotlib 3.10.8, pinned in \path{research23/requirements.txt}. No GPU or model training is used. CPU and memory specifications were not recorded. Run producers in a copy; they overwrite named records. Dependency provenance: \path{THIRD_PARTY_NOTICES.md}.

\end{document}